\newtheorem{theorem}{Theorem}[section]
\newtheorem{lemma}[theorem]{Lemma}
\newtheorem{proposition}[theorem]{Proposition}
\begin{document}

\author{Diego Valsesia, Enrico Magli\thanks{This work was supported by the European Space Agency (ESA-ESTEC) under grant 107104. The authors are with the Department of Electronics and Telecommunications, Politecnico di Torino, C.so Duca degli Abruzzi, 24, 10129 Torino, Italy. Email: firstname.lastname@polito.it}}
\title{A Novel Rate Control Algorithm for Onboard Predictive Coding of Multispectral and Hyperspectral Images}


\maketitle

\begin{abstract}
Predictive coding is attractive for compression onboard of spacecrafts thanks to its low computational complexity, modest memory requirements and the ability to accurately control quality on a pixel-by-pixel basis. Traditionally, predictive compression focused on the lossless and near-lossless modes of operation where the maximum error can be bounded but the rate of the compressed image is variable. Rate control is considered a challenging problem for predictive encoders due to the dependencies between quantization and prediction in the feedback loop, and the lack of a signal representation that packs the signal's energy into few coefficients. In this paper, we show that it is possible to design a rate control scheme intended for onboard implementation. In particular, we propose a general framework to select quantizers in each spatial and spectral region of an image so as to achieve the desired target rate while minimizing distortion. The rate control algorithm allows to achieve lossy, near-lossless compression, and any in-between type of compression, \emph{e.g.}, lossy compression with a near-lossless constraint. While this framework is independent of the specific predictor used, in order to show its performance, in this paper we tailor it to the predictor adopted by the CCSDS-123 lossless compression standard, obtaining an extension that allows to perform lossless, near-lossless and lossy compression in a single package. We show that the rate controller has excellent performance in terms of accuracy in the output rate, rate-distortion characteristics and is extremely competitive with respect to state-of-the-art transform coding.
\end{abstract}

\section{Introduction}
Image spectrometers collect vast amounts of data which can be used for a variety of tasks. Possible applications include geological research, terrain analysis, material identification, military surveillance and many others. Fine spectral resolution can be a desired featured when it comes to detecting fingerprints in the spectral response of a scene. Such applications are enabled by the richness of data captured by multispectral and hyperspectral sensors. A problem of handling such wealth of information naturally arises and calls for the use of compression methods. 

Algorithms to compress hyperspectral and multispectral images have been studied for a long time and are still an active subject of research. Onboard compression enables spacecrafts to save transmission time, allowing more images to be sent to the ground stations. The design of compression algorithms for onboard applications must carefully meet the limited resources in terms of computational power and memory available on the spacecrafts. Two main compression techniques are available in this scenario: transform coding and predictive coding. 

Transform coding relies on computing a linear transform of the data to achieve energy compaction and hence transmit few carefully chosen transform coefficients. One of the most popular approaches is JPEG2000 \cite{taubmanjpeg2000} and its multidimensional extension \cite{jpeg2000ext}. A wavelet-based 2D lossless and lossy compression algorithm has also been standardized for space applications \cite{ccsds122}. Spectral transforms to eliminate the inter-band redundancy have been subject of intense research. There exists an optimal transform for Gaussian sources, \emph{i.e.}, the Karhunen-Lo\`{e}ve transform (KLT) but its complexity does not match the computational resources typically available for onboard compression. Hence, low-complexity approximations to the KLT have been derived, such as the Pairwise Orthogonal Transform (POT) \cite{pot}, the fast approximate KLT (AKLT) \cite{aklt} and the AKLT$_2$ \cite{aklt2}. Transform coding allows to perform lossless and lossy compression and to accurately control the rate in a simple manner thanks to the simple relation between rate and quantized transform coefficients \cite{taubmanjpeg2000} \cite{rhodomain}. On the other hand, per-pixel quality control as in near-lossless compression is hard to obtain. A near-lossless layer can be added to a transform coder, \emph{e.g.}, as in \cite{carvajalmagli}, but this requires to also implement a decoder onboard. Transform coding also typically suffers from the problem of dynamic range expansion, which is a direct consequence of energy compaction. While it is difficult to generalize due to the availability of many different transforms and predictors, a transform generally uses many (past and future) pixels of the image to represent a given pixel, while a predictor generally employs few pixels in a causal neighborhood, thus making it less prone to performance loss when the prediction is reset over different image areas, \emph{e.g.}, in order to achieve error resilience.

Predictive coding uses a mathematical model to predict pixel values and encode only the prediction error. Adaptive linear prediction is often used \cite{calibrationartifacts, abrardoDSC, rizzolow, acap, asap, lastri} (\emph{e.g.}, the predictor considered in Sec.\ref{sec:123extension} relies on the LMS filter \cite{lms}, with the sign algorithm \cite{chosign} for weight update), but other methods have been devised as well, \emph{e.g.}, based on edge detection \cite{edgeprediction} or vector quantization \cite{vectorquant}. In lossless compression, the prediction residuals are written in the compressed file after entropy coding. Lossy compression instead quantizes them before entropy coding. The quantization step size determines the amount of compression and hence information losses with respect to the original image. Near-lossless compression is readily implemented by setting a maximum quantization step size, so that the quantization error never exceeds half of it.
On the other hand, rate control in a predictive coder is challenging because: i) no simple mathematical relationship between the rate and the quantized prediction residual exists, ii) the quality of the prediction, hence the magnitude of the residuals, and ultimately the rate depend on how coarse the quantization is; as an example, the analysis of quantizer error propagation in the feedback loop is considered in \cite{aiazzipyramid} for the case of Laplacian pyramids. These aspects are further discussed in Sec. \ref{sec:bkg}. 

In this paper we propose an innovative design of a rate controller for a predictive encoder.  We show that the proposed method can achieve accurate control, while having complexity suitable for onboard implementation. In particular, the algorithm is designed to work in line-based acquisition mode, as this is the most typical setup of spectral imaging systems. We first describe the proposed algorithm in general terms, as it can be applied to any predictive coder. Next, we focus our attention on using it with the LMS predictor used in the CCSDS-123 standard for lossless compression \cite{ccsds123}, which is an improved version of the Fast Lossless algorithm \cite{FL}. The resulting system can be seen as an extension of the standard featuring lossless, near-lossless and rate-controlled lossy compression. The rate controller provides lossy reconstructions with increasingly better quality, up to lossless encoding, as the target rate approaches that of lossless compression. Finally, the controller can also work in a hybrid rate-controlled and near-lossless mode by specifying the maximum quantization step size that the controller is allowed to use.  

The paper is organized as follows: in section \ref{sec:bkg} we review the literature on rate control methods; in section \ref{sec:basics} we outline the main idea and the basic steps involved in the algorithm; in section \ref{sec:blocktypes} we describe the specific steps of the algorithm; in section \ref{sec:modeB} we introduce a second version of the algorithm, achieving a more accurate control by introducing a slice-by-slice feedback mechanism exploiting the measured rate of the previously encoded slice; section \ref{hybrid} shows how the proposed rate controller can actually achieve control of both the rate and the maximum distortion, enabling a hybrid near-lossless rate control mode; section \ref{sec:123extension} proposes an extension of the CCSDS-123 standard to near-lossless and rate-controlled lossy compression; finally, in section \ref{results} we show the performance of the rate-control algorithm on some test images and compare the proposed extension of CCSDS-123 with state-of-the-art transform coding techniques.

\section{Background}
\label{sec:bkg}
Rate control is a relatively well studied problem in the field of image and video coding, where it fits the framework of rate-distortion (RD) theory. The main task of rate-distortion optimization methods is to minimize the distortion of an encoded source (\emph{e.g.}, an image or a video sequence) subject to a constraint on the rate. This problem of carefully allocating the available resources is typically tackled by means of two techniques: Lagrangian optimization and dynamic programming. A more comprehensive review of such methods is covered by \cite{OrtegaSPM}. 

The classical method of Lagrangian optimization was introduced by Everett \cite{everett}, and relies on defining a cost function using a Lagrange multiplier to trade off rate and distortion. In particular, assume we have a budget-constrained allocation problem, such as our rate control problem:
\begin{align}
\label{constr}
\underset{x(i)}{\mathsf{minimize~}} \sum_{i=1}^N D_{i,x(i)} \mathsf{~~subject~to~~}\sum_{i=1}^N R_{i,x(i)} \leq R_{\mathsf{target}}
\end{align}
where $x(i)$ is the coding option for unit $i$.
The Lagrangian formulation consists in the unconstrained minimization of $J_i = D_i + \lambda R_i$, which can be shown \cite{everett}\cite{Shoham_EfficientBitAllocation} to yield the same solution of \eqref{constr} for a suitable $\lambda = \lambda^*$. Furthermore, if the coding units are independent, the minimization can be carried out independently for each unit $i$. One of the main issues of this method is to find the appropriate value of $\lambda$ needed to find the optimal distortion, while satisfying the rate constraint. By noticing that $\lambda=\lambda(R)$ is a monotonic function of the rate, an iterative search strategy, such as dychotomic search, can be used to find the correct value of $\lambda$.

It is often the case that the coding units exhibit some form of dependency among each other, so that the coding decisions taken for one unit may have some impact on the other units. This is notably true for prediction-based systems \cite{shapiro_optimal}, where quantization of residuals introduces noise in the prediction loop and may degrade the quality of future predictions. The interdependency of the coding choices makes this problem more difficult to tackle and classical solutions, based on dynamic programming, typically model the dependencies using a tree or a trellis \cite{ortega_trellis} \cite{ortega_bitallocation} and find the optimal path by using the Dijkstra's shortest path algorithm \cite{shortestpath} or the Viterbi algorithm \cite{viterbi}. The rate constraint can be handled by a suitable pruning of the tree.

In this paper we are studying a problem of rate control in the context of predictive coding on board of spacecrafts, posing significant constraints on the complexity of algorithms that can be used. The previously cited methods all exhibit a complexity that is unsuitable for the scenario we are considering or are largely inefficient (\emph{e.g.}, the standard Lagrangian approach with i.i.d. assumptions only). Onboard rate control is performed very easily in the case of systems adopting transform coding \cite{satellite_data_compression}, \emph{e.g.}, wavelet-based methods. This is due to the possibility to use an i.i.d. assumption among different coding units, allowing to establish simple relationships between rate and quantized transform coefficients \cite{taubmanjpeg2000} \cite{rhodomain}. However, such models do not hold in the case of predictive compression, making our task harder. Our approach uses models and independence assumptions to simplify the problem but we are forced to introduce corrections to the output of the models due to the inevitable dependencies introduced by the propagation of errors in the feedback loop. While the proposed procedure does not generally yield the optimal solution, it is a practical algorithm that can be used in low-complexity scenarios, such as onboard compression; moreover, it indeed achieves almost optimal performance, as will be shown in Sec.\ref{RDnumericalperformance}.

\section{Rate Control Algorithm}
\label{sec:basics}
This section outlines the framework and the basic operations performed by the proposed rate control algorithm. 

The main idea behind the algorithm is to adopt a model to predict the rate and the distortion of the quantized prediction residuals. In order to achieve a flexible scheme allowing an effective control of the rate for various kinds of images, ranging from hyperspectral images (lots of bands, but typically small spatial resolution) to multispectral images (large spatial resolution, but few bands), the algorithm partitions the image into blocks of size $BS_x \times BS_y$, where $BS_x$ and $BS_y$ are tunable parameters. Partitioning into blocks allows to deal with the non-stationary behaviour of the image. In fact, the prediction mechanism is indeed able to eliminate slowly varying features but sudden variations in the image content (\emph{e.g.} discontinuities) are hardly predicted by the encoder, and consequently imply non-stationary prediction residuals. 

The task of the rate control algorithm is then to assign a quantization step to each block of residuals in a given spectral channel, according to the specified target rate. At the same time, this assignment affects the overall distortion introduced by the encoder and, hence, it should be chosen to keep the distortion as low as possible.
In this scenario, computational complexity plays a major role in many ways. First of all, typical memory capabilities of systems for onboard image compression allow the storage of a limited number of lines of the image with all their spectral channels. To match this limitation, the rate control algorithm operates one slice at a time, where we denote as \emph{slice} a structure composed of one row of blocks with all their spectral channels. Moreover, as will be explained in section \ref{sec:models}, the algorithm does not even need to store all the lines in the slice but just a few of them, thus requiring very little memory.

The main steps involved in the algorithm are:
\begin{enumerate}
\item the estimation of the variance of the unquantized prediction residuals by running the lossless predictor for a small number of lines (Sec. \ref{sec:models});
\item the $l_1$ projection algorithm to get an initial allocation of the quantization steps (Sec. \ref{sec:l1});
\item the Selective Diet algorithm for rate and distortion refinement (Sec. \ref{sec:SelectiveDiet}). 
\end{enumerate} 

\vspace*{-0.1cm}
\subsection{Rate and distortion models}
\label{sec:models}
We now introduce the model used to describe the prediction residuals in each block. This model allows to obtain closed-form expressions for the rate and the distortion of the quantized residuals in the block. 
It is commonly observed that accurate predictors tend to yield residuals with leptokurtic (high kurtosis) distribution, hence similar to the Laplace probability density function, which we use to model the distribution of prediction residuals: 
\begin{align}
f_r(x) = \frac{\Lambda}{2}e^{-\Lambda\vert x \vert},
\end{align} 
where $\Lambda$ is related to the variance $\sigma^2$ of the distribution by $\Lambda = \sqrt{\frac{2}{\sigma^2}}$.

We assume that the residuals in each block and the blocks themselves are independent of each other. This is a simplifying assumption in two ways. First, the prediction mechanism may fail to remove all the correlation among the residuals. However, this does not pose a significant problem as we expect that most of the correlation is removed, hence making our independence assumption very close to reality; the same assumption is made in rate allocation for transform coding, where transform coefficients are often assumed to be independent. Second and more important, the quantization of the residuals introduces noise that propagates in the prediction loop. This leads to dependencies among the residuals and among blocks. Optimizing the allocation of the quantization step sizes taking into account these dependencies can lead to improvements as the model becomes more accurate. However, one must resort to dynamic programming methods (\emph{e.g.}, the Viterbi algorithm) that would be far too complex for our scenario. Consequently, we have explored a simplified way of including the effect of quantization noise in our model, \emph{i.e.}, augmenting the variance of the block by an estimate of the noise variance, which corresponds to assuming that the residuals and the quantization noise are independent:
\begin{align}
\label{variance_estimation}
\tilde{\sigma^2} = \sigma^2 + \frac{Q^2}{12} , 
\end{align}
where $Q$ is the quantization step used in the same block in previous slice. We do this because the quantization step size of the current slice is not known when we need to use this model, as it is indeed the output of the rate control process. It can be noticed that $\frac{Q^2}{12}$ is the mean square error produced by uniform scalar quantization of step size $Q$ under the high-rate approximation.

The rate (expressed in bits-per-pixel) is derived as the entropy of an i.i.d. continuous source with Laplace distribution, after quantization by means of a uniform scalar quantizer with step size $Q$:
\begin{align}
\label{entropy}
R = -p_0 \log_2 p_0 - 2\sum_{i=1}^\infty p_i \log_2 p_i ~,
\end{align}
so we need the probability $p_0$ that the residual is quantized to the zero value and the probability $p_i$ of being mapped to the (positive) integer $i$. For the uniform scalar quantizer we can write:
\begin{align}
\label{p0}
p_0 = \int_{-\frac{Q}{2}}^{\frac{Q}{2}} \frac{\Lambda}{2}e^{-\Lambda \vert x \vert} dx = 1-e^{-\Lambda\frac{Q}{2}}
\end{align}   
\begin{align}
\label{pi}
p_i = \int_{iQ-\frac{Q}{2}}^{iQ+\frac{Q}{2}} \frac{\Lambda}{2}e^{-\Lambda \vert x \vert} dx = \frac{1}{2}\left( e^{-\Lambda \left(iQ-\frac{Q}{2}\right)} - e^{-\Lambda \left(iQ+\frac{Q}{2}\right)} \right)
\end{align}
Inserting \eqref{p0} and \eqref{pi} into \eqref{entropy}, it is possible to derive \eqref{rate_usq}.
\begin{figure*}[!t]
\begin{align}
\label{rate_usq}
R(\Lambda,Q) = &-\left( 1 - e^{-\Lambda \frac{Q}{2}} \right) \log_2 \left( 1-e^{-\Lambda \frac{Q}{2}} \right) - \frac{e^{-\Lambda \frac{Q}{2}}}{\log(2)} \left[\log\left( \frac{1-e^{-\Lambda Q}}{2} \right) +\frac{\Lambda Q}{2} - \frac{\Lambda Q}{\left( 1-e^{-\Lambda Q } \right) } \right]
\end{align}
\begin{align}
\label{dist_usq}
D(\Lambda,Q) = \frac{2-\frac{1}{4}e^{-\Lambda \frac{Q}{2}} \left( \Lambda^2 Q^2 + 4\Lambda Q + 8 \right)}{\Lambda^2}&+ \frac{-\Lambda Q \left( \Lambda Q +4 \right) + e^{\Lambda Q} \left[ \Lambda Q \left( \Lambda Q -4 \right) +8 \right] -8}{4\Lambda^2} \frac{e^{-\frac{3}{2}\Lambda Q}}{1-e^{-\Lambda Q}}
\end{align}
\hrulefill
\vspace*{4pt}
\end{figure*}

We use mean squared error (MSE) as distortion metric, which can be computed as
\begin{align*}
D(\Lambda,Q) &= \int_{-\frac{Q}{2}}^{\frac{Q}{2}} x^2 \frac{\Lambda}{2}e^{-\Lambda \vert x \vert} dx \\ &+ 2\sum_{i=1}^\infty \int_{iQ-\frac{Q}{2}}^{iQ+\frac{Q}{2}}  \left( x-iQ \right)^2 \frac{\Lambda}{2}e^{-\Lambda \vert x \vert} dx ,
\end{align*}
thus obtaining \eqref{dist_usq}.

We can notice that both the rate and the distortion are functions of the variance $\sigma^2$ of the unquantized residuals in the block and of the quantization step size $Q$, whose value is yet unknown. Each block in the slice has its own variance parameter and quantizations step size. The variance must be estimated, while obtaining the quantization step size is really the ultimate goal of the rate control algorithm. 
The variance can be estimated by running the predictor \emph{without quantizing the prediction residuals} for a certain number of lines. A small fraction of the total lines in the block are sufficient to get good estimates of the variance of the residuals. In a software implementation, this is one of the main factors impacting on final complexity because it requires to run the predictor essentially twice: the first time on a small subset of the lines, without quantization, to estimate variances and then, once the quantization steps have been calculated, to perform the actual encoding pass, quantizing the residuals.
The previous rate and distortion models are used by the algorithms presented in the following subsections to find the right value of $Q$ for each block to match the target rate globally and have a low distortion.

\subsection{Projection onto the positive $l_1$ ball}
\label{sec:l1}
\begin{figure}
\centerline{\includegraphics[width=0.8\columnwidth]{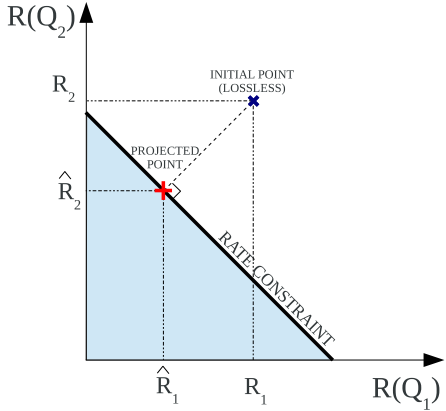}}
\vspace*{-0.3cm}
\caption{The rate point corresponding to the lossless allocation of Q's is projected onto the simplex defined by the rate constraint}
\label{simplex}
\end{figure} 

The goal of the algorithm described in the following is to provide an initial solution to the allocation problem. This solution, albeit inaccurate, is a good starting point to initialize the following algorithm (Selective Diet, explained in section \ref{sec:SelectiveDiet}). Suppose that the encoder is given a target rate for the encoded image equal to $T$ bits-per-pixel (bpp), and suppose that there are $N_B$ blocks in the current slice ($N_B$ is the product of the number of blocks in one band times the number of bands). We define the quantity $R_{\mathsf{target}} = T \cdot N_B$ as the product of the target rate in bpp and the number of blocks in the slice (note that this quantity does not represent the actual number of bits at our disposal since we are multiplying times the number of blocks and not the number of pixels). Ideally we would like to satisfy the rate constraint exactly, hence have 
\begin{align}
\label{rate_simplex}
\sum_{i=1}^{N_B} R(\Lambda_i,Q_i) = R_{\mathsf{target}}
\end{align}
where $Q_i$ is the quantization step size selected for the $i$-th block. Notice that since the rate of each block is a positive quantity, \eqref{rate_simplex} defines a simplex in $N_B$ dimensions. We can consider an initial solution having $Q_i=1 ~~ \forall i$ (lossless encoding), with corresponding rates $R(\Lambda_i,1)$. Geometrically (see Fig. \ref{simplex}), we have a vector in an $N_B$-dimensional space whose entries are the rates $R(\Lambda_i,1)$ and we can project it onto the simplex defined by \eqref{rate_simplex}. In other words, we seek to solve the following optimization problem, where we slightly abuse notation using boldface to indicate $N_B$-dimensional vectors and making the $R$ function operate component-wise:
\begin{align}
\label{projection}
\hat{\mathbf{R}} = \arg\min_{\mathbf{R}} \Vert \mathbf{R} - R(\mathbf{\Lambda},\mathbf{1}) \Vert_2 \mathsf{~~~subject~to~~~} \Vert \mathbf{R} \Vert_1 = R_{\mathsf{target}}
\end{align}
Problem \eqref{projection} is a continuous problem, whereas quantization step sizes are odd-integer-valued \footnote{Using odd-valued quantization step sizes is known to provide lower distortion for the same maximum error \cite{WuOdd}.}. After solving \eqref{projection} we need to search the value of $\hat{Q}_i$ such that $R(\Lambda_i,\hat{Q}_i)$ is closest to $\hat{R_i}$. Any search method such as linear search or binary search can be used for this purpose.

Projection onto a simplex is a special case of projection onto the $l_1$ ball, since the simplex is the positive part of the $l_1$ ball. $l_1$ projections algorithms have been subject of great interest in recent years due to surge in research on sparse methods. The field of compressed sensing \cite{donoho2006cs} has spawned from the discovery that $l_1$ penalized regressors can reconstruct a sparse signal exactly from a small number of random measurements, hence many reconstruction algorithms \cite{spgl1} include steps involving projections on the $l_1$ ball. We refer to the algorithm proposed in \cite{l1projections} to address the specific problem of projections onto the simplex. The algorithm has been shown to have $\mathcal{O}(N_B\log N_B)$ complexity. 
Being a continuous approximation to an integer-valued problem, the allocation returned by the projection algorithm can only provide a rough approximation to the desired rate. Nevertheless, it is expected to be close to a good solution, hence it is possible to improve it by making local modifications. This is the task performed by the Selective Diet algorithm.

\setlength{\textfloatsep}{10pt}{
\begin{algorithm}
\caption{Projection algorithm to solve \eqref{projection}}
\begin{algorithmic}
\State Sort $R(\mathbf{\Lambda},\mathbf{Q})$ into $\mathbf{\mu}$ in descending order
\State Find $\rho = \mathrm{max} \left\lbrace j : \mu_j - \frac{1}{j}\left( \sum_{i=1}^{N_B} \mu_i - R_{target} \right) > 0 \right\rbrace$
\State Define $\theta = \frac{1}{\rho} \left(  \sum_{i=1}^{N_B} \mu_i - R_{target} \right)$
\State Find $\mathbf{w}$ such that $w_i = \mathrm{max} \left\lbrace R(\Lambda_i,Q_i) - \theta ,0 \right\rbrace$
\State Find $\mathbf{\hat{Q}} = R^{-1}(\mathbf{\Lambda},\mathbf{w})$
\end{algorithmic}
\end{algorithm}}

\vspace*{-0.3cm}
\subsection{Selective Diet}
\label{sec:SelectiveDiet}
Selective Diet tries to solve an integer optimization problem consisting in lowering the distortion of the encoded slice while satisfying the constraint on its final rate. The algorithm is a local search method, similar in flavour to other discrete optimization methods such as hill climbing \cite{russellartificial} or meta-heuristics like tabu search \cite{tabu}.
At a high level it is possible to say that the algorithm is primarily concerned with finding a solution that meets the specification on the rate as closely as possible, while promoting solutions having low distortion. It does so by making local adjustments to the solution provided by the $l_1$ projector, hence the need for a good initialization point. A graphic visualization of a single iteration is shown in Fig. \ref{SDscheme}.

In this section, for convenience of explanation, we shall represent the blocks in the current slice as nodes in a chain. It is possible to modify the chain by making adjustments to the nodes, namely changing the quantization step size assigned to that node. Only local adjustments are allowed: the quantization step of each node can only be increased by 2 or decreased by 2. We shall call \emph{+2 level} an assignment of $Q_i+2$ where $Q_i$ is the current value of the quantization step, called \emph{default level}, and \emph{-2 level} an assignment equal to $Q_i-2$. A chain can be formed by choosing one of those three levels for each and every node. Consistently with the notation, we will call \emph{+2}/\emph{default}/\emph{-2} \emph{chain} a chain made only of nodes in the +2/default/-2 level. The ultimate goal of Selective Diet is creating a chain that meets the rate constraint and has low distortion. Let us now introduce a lemma at the basis of the local adjustments made to the default chain.

\begin{lemma}
\label{SDlemma}
Suppose that the default chain satisfies $\sum_{i=1}^{N_B} R(\Lambda_i,Q_i) = R_\mathsf{target}$, then if there exists a new chain satisfying $\sum_{j=1}^{N_B} R(\Lambda_j,Q_j) = R_\mathsf{target}$, it must contain nodes from both the $+2$ and $-2$ levels.
\end{lemma}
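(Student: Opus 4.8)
The plan is to reduce the statement to one monotonicity property of the rate model together with an elementary sign count. The property I would use is that, for every fixed $\Lambda>0$, the map $Q\mapsto R(\Lambda,Q)$ of \eqref{rate_usq} is \emph{strictly decreasing} --- intuitively, a coarser quantizer concentrates more mass on the zero symbol and shortens the tail, lowering the entropy. Rigorously I would establish this by differentiating \eqref{rate_usq} with respect to $Q$ and checking $\partial R/\partial Q<0$ for all $Q>0$ and $\Lambda>0$ (equivalently, working from \eqref{p0}--\eqref{pi}, as $Q$ grows $p_0=1-e^{-\Lambda Q/2}$ increases while the tail masses $p_i$ are driven toward $0$). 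For the lemma only the strict inequalities $R(\Lambda_i,Q_i+2)<R(\Lambda_i,Q_i)<R(\Lambda_i,Q_i-2)$ at the admissible arguments are needed.

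Given this, I would argue by a balance of rate increments. Write the default chain as $Q_1,\dots,Q_{N_B}$ with $\sum_i R(\Lambda_i,Q_i)=R_{\mathsf{target}}$, and let a candidate chain distinct from it partition the nodes into the set $S_+$ of nodes promoted to the $+2$ level, the set $S_-$ of nodes demoted to the $-2$ level, and the remaining nodes left at the default level. Subtracting the default total rate from the candidate total rate (both equal to $R_{\mathsf{target}}$ by hypothesis) yields
\begin{align*}
0=\sum_{i\in S_+}\big[R(\Lambda_i,Q_i+2)-R(\Lambda_i,Q_i)\big]+\sum_{i\in S_-}\big[R(\Lambda_i,Q_i-2)-R(\Lambda_i,Q_i)\big].
\end{align*}
By the monotonicity above, every term of the first sum is strictly negative and every term of the second strictly positive. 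If $S_-=\emptyset$ the right-hand side is $\le 0$, with equality only when $S_+=\emptyset$ as well, i.e.\ the candidate chain coincides with the default chain; symmetrically $S_+=\emptyset$ forces $S_-=\emptyset$. Hence any chain distinct from the default chain that still attains $R_{\mathsf{target}}$ must have both $S_+\ne\emptyset$ and $S_-\ne\emptyset$, which is the assertion.

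The delicate points are the boundary cases and the precise reading of ``new chain.'' A node with $Q_i=1$ admits no $-2$ level, so I would phrase the monotonicity statement and the conclusion only over the nodes for which the $\pm 2$ moves are admissible; if no node admits a $-2$ move, the lemma holds vacuously, since by the same inequality every non-default chain then has total rate strictly below $R_{\mathsf{target}}$. The main obstacle --- the only part that requires genuine computation rather than bookkeeping --- is the rigorous proof that $R(\Lambda,Q)$ is strictly decreasing in $Q$ starting from the closed form \eqref{rate_usq}; once that is in hand, the lemma follows from the displayed identity and a count of signs.
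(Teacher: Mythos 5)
Your proof is correct and follows essentially the same route as the paper's: both rest on the strict monotonicity of $R(\Lambda,\cdot)$ in $Q$ (which the paper, like you, asserts rather than derives from \eqref{rate_usq}) plus a sign count over promoted and demoted nodes, the paper merely packaging it as two contradiction cases ($+2$/default only and $-2$/default only) instead of your single balance identity. If anything your version is slightly more careful, since you observe that equality forces the candidate chain to coincide with the default one and you flag the $Q_i=1$ boundary case, neither of which the paper's proof addresses.
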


\begin{proof}
By contradiction, suppose that a chain meeting the rate constraint exists and is composed of nodes from the $+2$ and default levels only. However, $R(\Lambda_i,Q_i^{(+2)}) < R(\Lambda_i,Q_i^{(def)})$, so it must be that $\sum_{i=1}^{N_B} R(\Lambda_i,Q_i^{(ch)}) < R_\mathsf{target}$. Hence the rate is not met and such a chain does not exist.
Similarly, suppose that a chain meeting the rate exists and is composed of nodes from the $-2$ and default levels only. However, $R(\Lambda_i,Q_i^{(-2)}) > R(\Lambda_i,Q_i^{(def)})$, so it must be that $\sum_{i=1}^{N_B} R(\Lambda_i,Q_i^{(ch)}) > R_\mathsf{target}$. Hence the rate is not met and such a chain does not exist.
Therefore, a chain meeting the target can exist only if it uses nodes from both the $+2$ and $-2$ levels.
\end{proof}
Relying on this lemma, even when the rate is exact the algorithm must try to move some nodes to the -2 and +2 levels in order to optimize the distortion. The starting point is to consider the -2 chain as the new candidate output chain, since it has the lowest distortion. Obviously, selecting the -2 chain causes an increase in the rate, which must be compensated to meet the target. In order to reduce the rate moving back towards the target, some nodes are assigned to the +2 level. Each node is associated a cost function that considers the trade-off between the gain in rate reduction and the loss in quality due to switching from the -2 to the +2 level. The following cost function modelling the trade-off with a Lagrange multiplier is used:
\begin{align}
\label{SDcost}
J_i &= \left[ D(\Lambda_i,Q_i^{(-2)}) - D(\Lambda_i,Q_i^{(+2)}) \right] \notag \\ &+ \lambda \left[ R(\Lambda_i,Q_i^{(-2)}) - R(\Lambda_i,Q_i^{(+2)}) \right] ~~~~ i \in [1,N_B]
\end{align} 
The nodes are sorted by decreasing value of this cost function and this is the order in which the nodes are selected to be assigned to the +2 level. Specifically, one node at a time is added to the +2 level until the rate reaches $R_\mathrm{target}$. The new chain is then formed by the nodes that remained at the -2 level and the nodes that were demoted to the +2 level. This chain is taken as the new default chain for a new iteration of the algorithm in order to try to further improve distortion. Notice that even if in a single iteration the algorithm selects nodes from the +2 and -2 levels only, it is possible to reach any value of $Q$ using successive iterations, thus considering all possible odd values of the quantization step as possible choices for any block. The algorithm is run in a greedy manner, stopping when the distortion is not improving further. We have experimentally observed that the algorithm requires very few iterations (typically less than 10).

It should be noted that the $l_1$ projector may occasionally provide an initial solution that is not close enough to the target rate. We address this issue in the following way: if $\sum_{i=1}^{N_B} R(\Lambda_i,\hat{Q}_i) \leq 0.99 R_\mathrm{target}$, it means that the solution of the $l_1$ projector is underutilizing the available rate, so lemma \ref{SDlemma} does not hold and we run an iteration of Selective Diet with only the default and -2 chains. Instead, when the rate exceeds the target, running Selective Diet in the standard fashion already allows to reduce it back to the target, so no modification is made. Finally, the value of $\lambda$ controls the tradeoff between the reduction in rate and increase in distortion when adding a node to the +2 level. The optimal value of $\lambda$ would let us choose those nodes that allow a maximization of the gain in rate and a minimization of the increase in distortion. However, finding the optimal value would be computationally very demanding, so we resort to initializing $\lambda$ to an empirically determined value ($\lambda=50$) that we observed to be performing nicely over the whole test image set. This value is adjusted dynamically by the algorithm, halving it every time an increase in the overall distortion is observed in place of a decrease and rerunning the optimization with the new value. It is also possible to devise a lower complexity solution that does not adjust $\lambda$ and does not repeat the optimization procedure, at a price of lower performance.

The complete algorithm is summarized in Algorithm \ref{SDiet}.

\begin{figure}
\vspace*{-0.2cm}
\centerline{\includegraphics[width=0.9\columnwidth]{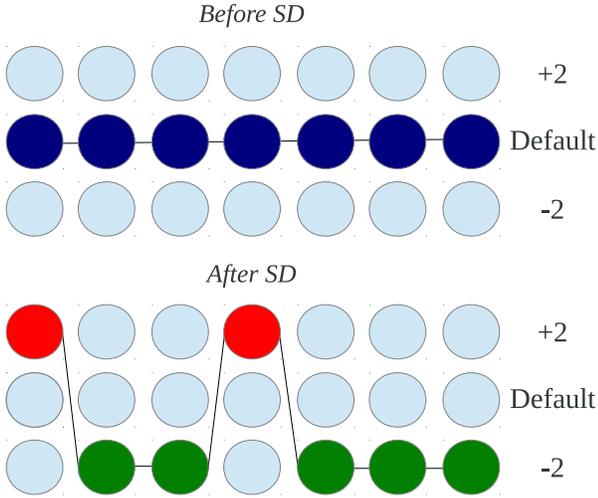}}
\caption{One iteration of Selective Diet tries to reduce the quantization step size by 2, but due to the increase in rate, the step size is actually increased by 2 for some blocks chosen as the best tradeoff between increase in distortion and gain in rate. Note that blocks of the default chain have different steps sizes, although the chain is depicted as a straight line for convenience.}
\label{SDscheme}
\end{figure} 

\begin{algorithm}
\caption{Selective Diet}
\begin{algorithmic}
\Require $\mathbf{Q_g}$, $\lambda=50$, $N_{iter}$
\For{$iter=1 \to N_{iter}$}
\State Set \textbf{default} = $\mathbf{Q_g}$ , {$\mathbf{Q}^{(+2)}$} = \textbf{Q+2}, $\mathbf{Q}^{(-2)}$ = \textbf{Q-2} 
\State Set output chain $\mathbf{Q_g} = \mathbf{Q}^{(-2)}$
\State Compute $R_{diff} = \sum R(\mathbf{Q_g}) - R_{target}$, \emph{i.e.}, the rate you need to lose to reach the target
\State Sort the nodes in {$\mathbf{Q}^{(+2)}$} by decreasing value of $J_i = \left( D_i^{(-2)} - D_i^{(+2)} \right) + \lambda \left( R_i^{(-2)} - R_i^{(+2)} \right)$
\State $i=1$
\While{$\sum R(\mathbf{Q_g}) - R_{target} < R_{diff}$}
\State Replace the corresponding node in $\mathbf{Q_g}$ with the i-th node in the sorted {$\mathbf{Q}^{(+2)}$}
\State $i=i+1$
\EndWhile  
\If{$iter \neq 1$}   
\If{Distortion did not lower AND inner iterations not exceeded}
\State Set $\lambda \gets \lambda/2$ and repeat current iteration
\Else
\State Proceed to next iteration
\EndIf
\EndIf
\EndFor
\end{algorithmic}
\label{SDiet}
\end{algorithm}

\section{Block Classification}
\label{sec:blocktypes}
The previous section outlined the basic operations of the rate control algorithm. We have discussed how models can be used to predict the rate and distortion of quantized blocks of prediction residuals. We have also introduced the $l_1$ projector and the Selective Diet algorithm that exploit the models to solve the problem of allocating quantization step sizes to the blocks to achieve the desired rate with low distortion. However, some improvements can be made in order to introduce additional features and solve problems not accounted for by the models; in this section we describe how blocks can be classified into three distinct classes to address those issues.

In particular, each block can be of one out of three types, labelled as: $\mathrm{NORMAL}$, $\mathrm{INFTY}$, $\mathrm{SKIP}$. The $\mathrm{NORMAL}$ type is for regular blocks not falling in any of the other categories, whose behaviour in the algorithm is just as described so far. 

The $\mathrm{INFTY}$ type is for blocks that are estimated to have a very low variance of the prediction residuals (\emph{e.g.}, $\sigma^2 < 0.1$). This happens for blocks in which the original image is very uniform so that most of the residuals are zero or close to zero. The rate spent for these blocks is mostly determined by quantization noise in prediction loop, but this is not detected during variance estimation because it is run in a lossless fashion, thus not producing any quantization noise. This means that the simplifying assumption of \eqref{variance_estimation} does not hold. Underestimating the variance will result in very inaccurate estimates of the rate of those blocks and improper allocation of the quantization steps, potentially affecting other blocks due to the propagation of quantization errors. Therefore, in the algorithm we exclude $\mathrm{INFTY}$ blocks from the projection and Selective Diet steps in order to avoid feeding those algorithms with misleading information. $\mathrm{INFTY}$ blocks are then treated separately. After the projector returns its initial solution, whenever an $\mathrm{INFTY}$ block is encountered in the slice, the same $Q$ as the closest $\mathrm{NORMAL}$ block in the same band is assigned to it. If no $\mathrm{NORMAL}$ block has been encountered yet and it is not the first slice then the same $Q$ of the block in the same position in the previous slice is used. Otherwise, if it is the first slice, $Q = 1$ is used. If the last encountered block is not a $\mathrm{NORMAL}$ block but a $\mathrm{SKIP}$ block, then the current $\mathrm{INFTY}$ block becomes a $\mathrm{SKIP}$ block. Except when the block becomes $\mathrm{SKIP}$, the target rate is updated for the Selective Diet algorithm. It is assumed that the $\mathrm{INFTY}$ block is driven by quantization noise so the target rate is updated as
\begin{align}
R_\mathrm{target} \leftarrow R_\mathrm{target} - R\left(\sqrt{\frac{24}{Q^2}},Q\right)
\end{align}

Optionally, $\mathrm{SKIP}$ blocks can be generated as a way to perform a further rate-distortion optimization by deciding to ``skip" a block, \emph{i.e.}, set to zero the prediction residual for all samples in the block and signal it using a 1-bit flag, if the predicted increase in distortion is low compared to the rate saving obtained by not encoding the block at all. However, skipping may introduce significant noise in the prediction loop, so the amount of skipped blocks must be controlled. Block skipping is useful only at low rates, therefore $\mathrm{SKIP}$ blocks can be generated only when the target rate is below 1 bpp, and a fixed percentage of blocks is skipped, as function of the target rate. This percentage increases as the target rate decreases
according to the following rule:
\begin{align}
p_s = \begin{cases} (1 - T)^3 &\mbox{if }T \leq 1\\ 
0 &\mbox{otherwise }
\end{cases}  
\end{align}
In order to choose which blocks must be skipped, the blocks in the current slice are sorted by decreasing value of $\Lambda$ and the first blocks in the sorted order will
be skipped. 

\section{Feedback-based mode}
\label{sec:modeB}

The rate control algorithm outlined so far is completely model-based, meaning that no information about the real rate of the encoded slices is available. We shall refer to this method as \textsc{MODE A} of the algorithm. A more accurate control can be achieved by adding a feedback mechanism that modifies the target rate for future slices based on the actual rate used to encode the previous slices. In particular, \textsc{MODE B} of the algorithm measures how many bits have been used to encode previous slices and adjusts the input target rate for the next slices so to achieve the global target rate. Note that we do not want to increase the complexity of the system, hence we are not performing a multi-pass encoding of the same slice but rather correcting the target for future slices. Although \textsc{MODE B} does not increase the complexity and can achieve more accurate control, it might lower the rate-distortion performance of the encoded image. To see this, let us consider a toy case in which the image is made of two slices, having the same rate-distortion function. The global rate-distortion curve for the whole image is convex, but, by adjusting the rate on a slice-by-slice basis, we operate on two distinct points of the curve and the final rate-distortion point lies on a straight line joining the two operating points, certainly above the convex curve. Hence, per-slice oscillations in the target rate introduce some suboptimality, which is more severe the farther apart the operating points of each slice lie in the rate-distortion plane.

\textsc{MODE B} adopts a Least-Mean-Square tracking approach to determine the target rate for the next slice, after measuring the rate produced by the encoding of the current slice. The target update formula is derived to take into account two issues. First, the inaccuracies in the rate controller make the actual output rate different from the target, thus we want to estimate the input-output relationship of the controller and track it in case of nonstationary behaviour. Second, we would like to count how many bits were used up to the current slice, and modify the target rate depending on the amount of bits that we saved, and we would like to spend on the next slices or, viceversa, the number of bits that we spent but we should have not. The goal is to try to assign all, but not more than the budget bits at our disposal, by spending them on or saving them from the remaining slices. The final rate update formula, to be motivated hereafter, is:
\begin{align}
\label{newtargetbpp}
T_{new}[n+1] &= \eta[n+1] + \frac{c[n+1]}{\tau}\cdot \frac{1}{\bar{w}[n]}
\end{align}
with
\vspace*{-0.1cm}
\begin{align}
c[n+1] &= \sum_{k=0}^n \left( T - y[k] \right) = c[n] + T - y[n] \\
\eta[n+1] &= \eta[n] + \bar{w}[n]\left[ T - y[n] + \frac{c[n]}{\tau} \right] \label{NTupdate} \\
\bar{w}[n] &= \frac{1}{\vert \mathcal{I} \vert} \sum_{k\in \mathcal{I}} w[k]  
\end{align}
where $y[n]$ is the actual rate produced encoding slice $n$, $T_{new}[n+1]$ is the target rate specified to the $(n+1)$-th slice, which is the next slice to be coded, and $T$ is the original target rate for the whole image (and the initial condition for $T_{new}$). $c[n]$, which we call ``residual budget'', stores how much deviation in rate from $T$ has been accumulated up to slice $n$. The $\tau$ factor used in the formulas plays the role of a time constant, ideally distributing the residual budget over $\tau$ future slices. It can be noticed that equation \eqref{costtrackingbudget} reduces to just a tracking term, when $\tau=+\infty$. Also notice that, for $\tau=2$, the residual budget term in \eqref{costtrackingbudget} is exactly $(c[n+1])^2$. $\bar{w}[n]$ is the ratio between output and input rate, averaged over the $\vert \mathcal{I}\vert$ previous slices identified by set $\mathcal{I}$, and $\vert \mathcal{I}\vert$ denotes the cardinality of the set. As we shall see, different choices of $\mathcal{I}$ are possible and yield different results. As special cases, we notice that, when $\mathcal{I}=\lbrace n \rbrace$, the algorithm does not average on previous slices, hence it is most suited for highly non-stationary scenarios, while, when $\mathcal{I}=\left\lbrace0,1,\ldots,n\right\rbrace$, the algorithm uses all the history for averaging, yielding the best performance for stationary scenarios. The following theorems prove the rate control performance in such special cases. The wide sense stationarity (WSS) assumption that we make in the proofs has been verified to be a rather good model, since it basically means that the non-ideal behaviour of the rate controller, that we are trying to correct, has certain regularity properties. We remark that experimental results showed that when $w[n]$ is WSS, the output rate of the memory-1 method converges to $T$ but the residual budget converges to a non-zero value proportional to the variance of $w[n]$. This is why we advocate that the long-memory method is better when we expect a stationary behaviour. However, the memory-1 method (or a method with a limited memory) is better for tracking non-stationarities thanks to Theorem \ref{costmin}. In the proofs we will denote $\frac{c[n+1]}{\tau}\cdot \frac{1}{\bar{w}[n]} =  \xi$ for brevity.

\begin{proposition}[\bf{Convergence of long-memory method}]
Let the rate controller obey the input-output relationship $y[n] = w[n]T_{new}[n]$, being $w[n]$ a wide sense stationary random process with mean $\mathbb{E}\left[ w[n] \right] = \mu$ and $\mathbb{E}\left[ \left( w[n+l]-w \right) \left( w[n]-w \right) \right] = 0 , \forall l\neq 0$. Let $T_{new}$ be updated as in \eqref{newtargetbpp} with $\mathcal{I}=\left\lbrace0,1,\ldots,n\right\rbrace$. Then,

\begin{align}
\lim_{n\rightarrow +\infty} \mathbb{E}\left[ y[n] \right] &= T \tag{Convergence to target} \\
\lim_{n\rightarrow +\infty} \mathbb{E}\left[ c[n] \right] &= 0 \tag{Convergence to zero residual budget}
\end{align}
\end{proposition}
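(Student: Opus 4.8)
\emph{Overall approach.} Since the statement concerns only the expectations, the plan is to track the deterministic sequences $m[n]:=\mathbb{E}[\eta[n]]$, $b[n]:=\mathbb{E}[c[n]]$, $\mathbb{E}[y[n]]$, reduce the coupled update rule to a two-dimensional \emph{asymptotically linear} recursion, and read off its fixed point. A first, essentially free, reduction: taking expectations in $c[n+1]=c[n]+T-y[n]$ gives $b[n+1]-b[n]=T-\mathbb{E}[y[n]]$, so once $b[n]$ is shown to converge to a finite limit the increments vanish and $\mathbb{E}[y[n]]\to T$ automatically; it therefore suffices to prove $b[n]\to 0$.

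\emph{Decoupling via whiteness.} Unrolling \eqref{newtargetbpp}--\eqref{NTupdate}, $T_{new}[n]$ is a measurable function of $w[0],\dots,w[n-1]$ and of the constants $T,\tau$ only; since $w[n]$ is uncorrelated with --- and, for the argument, taken independent of --- the past, $\mathbb{E}[y[n]]=\mathbb{E}[w[n]\,T_{new}[n]]=\mu\,\mathbb{E}[T_{new}[n]]$. Also $\bar w[n]=\frac{1}{n+1}\sum_{k=0}^{n}w[k]\to\mu$ in $L^2$, because WSS plus whiteness give $\mathrm{Var}(\bar w[n])=\mathrm{Var}(w[n])/(n+1)\to 0$. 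After establishing a priori $L^2$ boundedness of $\eta[n]$, $c[n]$, $T_{new}[n]$ and of $1/\bar w[n]$ (here the positivity and boundedness of the rate ratio $w[n]$ and the WSS regularity are genuinely used), Cauchy--Schwarz turns every cross term $\mathbb{E}[(\bar w[n]-\mu)\,\zeta[n]]$ with $\zeta[n]\in\{c[n],\eta[n],y[n],\dots\}$ into an $o(1)$ term, so inside all relevant expectations $\bar w[n]$ may be replaced by $\mu$ up to $o(1)$ errors.

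\emph{The limiting recursion.} By the previous step $\mathbb{E}[T_{new}[n]]=m[n]+\frac{1}{\tau\mu}b[n]+o(1)$ and hence $\mathbb{E}[y[n]]=\mu m[n]+\frac{1}{\tau}b[n]+o(1)$. Taking expectations in \eqref{NTupdate} and substituting, the residual budget contributions cancel, leaving $m[n+1]=(1-\mu^2)m[n]+\mu T+o(1)$; the $c$-recursion gives $b[n+1]=(1-\tfrac{1}{\tau})b[n]+(T-\mu m[n])+o(1)$. Provided $0<\mu<\sqrt{2}$ and $\tau>\tfrac12$ (both comfortably satisfied in the regime of interest, $\mu\approx 1$, $\tau=2$), the first recursion is a contraction with fixed point $T/\mu$, so $m[n]\to T/\mu$; then the forcing term $T-\mu m[n]\to 0$ and the second recursion, also a contraction, forces $b[n]\to 0$. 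Combined with the first reduction --- or directly, $\mathbb{E}[y[n]]=\mu m[n]+o(1)\to T$ --- this establishes both limits.

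\emph{Main obstacle.} The algebra of the limiting linear system is easy, the cancellation of the budget terms in the $\eta$-update being the one thing worth noticing; the real work is licensing the passage to that system, i.e.\ proving the uniform-in-$n$ second-moment bounds on $\eta[n],c[n],T_{new}[n]$ and controlling $1/\bar w[n]$ before any interchange of limits, and deciding how strongly to read the hypothesis on $w[n]$ --- plain uncorrelatedness does not obviously let one factor $\mathbb{E}[w[n]\,T_{new}[n]]$ through the nonlinear map defining $T_{new}[n]$, whereas independence (or boundedness together with a martingale-difference argument) does. I would therefore state the boundedness of $w[n]$ and the stability conditions on $\mu$ and $\tau$ as explicit standing assumptions.
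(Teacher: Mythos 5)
Your proposal is correct, and its skeleton matches the paper's: replace $\bar w[n]$ by $\mu$ using ergodicity, pass to expectations to get a linear recursion, and identify $T$ as the fixed point. The organization differs in a way that buys something real. The paper (which explicitly calls its argument a sketch) eliminates $\eta$ and $c$ to obtain a single second-order recursion in $T_{new}$ and then merely takes the limit on both sides and solves for $y^*$ --- i.e., it verifies the fixed point while presupposing that the limit exists, and it never states a stability condition. Your triangular $(m,b)=(\mathbb{E}[\eta],\mathbb{E}[c])$ system is the state-space form of the same dynamics: the characteristic polynomial of the paper's recursion factors as $\bigl(z-(1-\mu^2)\bigr)\bigl(z-(1-\tfrac{1}{\tau})\bigr)$, which are exactly your two contraction factors, so your conditions $\mu^2<2$ and $\tau>1/2$ are precisely what the paper's argument silently needs (and the former mirrors the $w^2\le 2$ hypothesis the paper does state for the memory-1 theorem). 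Your cancellation of the budget terms in the $\eta$-update, making the $m$-recursion autonomous, is the observation that makes the decoupling clean. You are also right to flag that plain uncorrelatedness does not justify $\mathbb{E}[w[n]\,T_{new}[n]]=\mu\,\mathbb{E}[T_{new}[n]]$, since $T_{new}[n]$ depends nonlinearly on the past through $1/\bar w[n-1]$; the paper performs the same factorization without comment, so your proposal to strengthen the hypothesis to independence (or a bounded martingale-difference structure), together with the uniform second-moment bounds needed to replace $\bar w[n]$ by $\mu$ inside expectations, closes gaps the paper leaves open rather than introducing new ones.
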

\begin{proof}
We will not give a formal proof of this result, rather just a sketch. We notice that the sequence of averages over $n$ samples $\bar{w}[n]$ has a limit $\lim_{n\rightarrow \infty} \bar{w}[n] = \mathbb{E}\left[ w[n] \right] = \mu$ thanks to the ergodicity of $w[n]$. We suppose that it reaches this limit value fast and thus we approximate $\bar{w}[n] \approx \mu$ for all $n>n_0$. Using this fact and performing some algebraic manipulations on \eqref{NTupdate} and \eqref{newtargetbpp}, similar to those done in the proof of Theorem \ref{convergence}, we obtain the following recursion
\begin{align*}
T_{new}[n+1] &= \frac{\mu}{\tau}T + \left( 2 - \mu w[n] - \frac{w[n]}{\mu\tau} \right) T_{new}[n] \\ &- \left( 1-\mu w[n]-\frac{w[n]}{\mu\tau} + \frac{\mu^2}{\tau} \right) T_{new}[n-1]
\end{align*}
Hence 
\begin{align*}
\mathbb{E}\left[ y[n+1] \right] &= \mu \left( 2 - \mu^2 - \frac{1}{\tau} \right) \mathbb{E}\left[ T_{new}[n] \right] \\ &- \mu \left( 1-\mu^2-\frac{1}{\tau} + \frac{\mu^2}{\tau} \right) \mathbb{E}\left[ T_{new}[n] \right] + \frac{\mu^2}{\tau}T
\end{align*}
We take the limit on both sides to get
\begin{align*}
y^* &=  \left( 2 -\mu^2-\frac{1}{\tau} -1+\mu^2 + \frac{1}{\tau}  - \frac{\mu^2}{\tau} \right) y^* + \frac{\mu^2}{\tau}T \\
y^* &= \lim_{n\rightarrow +\infty} \mathbb{E}\left[ y[n] \right] = T.
\end{align*}
Similarly, the residual budget term can be shown to follow
\begin{align*}
c[n] &= \frac{1}{\frac{\mu}{\tau} + \frac{1}{\mu\tau}- \frac{1}{\mu\tau}} \cdot \\ & \Big[ T_{new}[n] - \left( 1-\mu w[n-1] - \frac{w[n-1]}{\mu\tau} \right) T_{new}[n-1] \\ &- \left(\mu+\frac{1}{w\tau} \right)T \Big] + T - w[n-1] T_{new}[n-1]
\end{align*}
Hence,
\begin{align*}
&\lim_{n\rightarrow +\infty} \mathbb{E}\left[ c[n] \right] = \frac{1}{\frac{\mu}{\tau} + \frac{1}{\mu\tau}- \frac{1}{\mu\tau}} \cdot \\ & \left[ \frac{T}{\mu} - \left( 1-\mu^2-\frac{1}{\tau} \right) \frac{T}{\mu} - \left( \mu+\frac{1}{\mu\tau} \right)T \right] + T - \mu\frac{T}{\mu} = 0
\end{align*}
\end{proof} 

\begin{theorem}[\bf{Convergence of memory-1 method}]
\label{convergence}
Let the rate controller obey the input-output relationship $y[n] = w\cdot T_{new}[n]$, with $w^2 \leq 2$ and let $T_{new}$ be updated as in \eqref{newtargetbpp} with $\mathcal{I}=\lbrace n \rbrace$. Then,
\begin{align}
\lim_{n\rightarrow +\infty} y[n]  &= T \tag{Convergence to target} \\
\lim_{n\rightarrow +\infty} c[n]  &= 0 \tag{Convergence to zero residual budget}
\end{align}
\end{theorem}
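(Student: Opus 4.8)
The plan is to collapse the three coupled first–order updates \eqref{newtargetbpp}, \eqref{NTupdate} and the residual–budget recursion $c[n+1]=c[n]+T-y[n]$ into a single closed second–order linear recursion for $T_{new}[n]$, exactly mirroring the sketch given for the long–memory proposition but now with $\bar w[n]=w[n]=w$ constant. Concretely, I would substitute $\eta[n]=T_{new}[n]-\frac{c[n]}{\tau w}$ (read off from \eqref{newtargetbpp}) into \eqref{NTupdate}; after using $c[n+1]=c[n]+T-y[n]$ this yields the first–order identity
\begin{align*}
T_{new}[n+1]=T_{new}[n]+\left(w+\frac{1}{\tau w}\right)\!\bigl(T-y[n]\bigr)+\frac{w}{\tau}\,c[n].
\end{align*}
Writing the same identity at index $n-1$ lets me eliminate $c[n]$ in favour of $T_{new}[n]$, $T_{new}[n-1]$ and $y[n-1]$, and inserting $y[n]=w\,T_{new}[n]$ everywhere gives
\begin{align*}
T_{new}[n+1]=\left(2-w^{2}-\tfrac{1}{\tau}\right)T_{new}[n]-\left(1-w^{2}-\tfrac{1}{\tau}+\tfrac{w^{2}}{\tau}\right)T_{new}[n-1]+\tfrac{w}{\tau}\,T ,
\end{align*}
which is the constant–coefficient specialization of the recursion appearing in the long–memory proof.

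Next I would locate the equilibrium: setting $T_{new}[n+1]=T_{new}[n]=T_{new}[n-1]=T_{new}^{*}$ and simplifying reduces to $\frac{w^{2}}{\tau}T_{new}^{*}=\frac{w}{\tau}T$, hence $T_{new}^{*}=T/w$ and the candidate limit $y^{*}=w\,T_{new}^{*}=T$. It then remains to show the homogeneous recursion is asymptotically stable, so that $T_{new}[n]\to T_{new}^{*}$ for every initialization. I would apply the Jury (Schur–Cohn) test to the characteristic polynomial $P(z)=z^{2}-\left(2-w^{2}-\frac1\tau\right)z+\left(1-w^{2}-\frac1\tau+\frac{w^{2}}{\tau}\right)$. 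The three relevant quantities evaluate cleanly: $P(1)=\frac{w^{2}}{\tau}$, $P(-1)=(2-w^{2})\!\left(2-\frac1\tau\right)$, and the constant term $a_{0}=(1-w^{2})\!\left(1-\frac1\tau\right)$, for which $|a_{0}|\le\bigl(1-\tfrac1\tau\bigr)<1$ whenever $1\le\tau<\infty$ and $0\le w^{2}\le 2$. Under the hypothesis $w^{2}\le 2$ (together with $w\neq 0$ and finite $\tau$, which are implicit) all three are non–negative; for $w^{2}<2$ they are strictly positive, which by the Jury criterion places both roots of $P$ strictly inside the unit disk. Hence $T_{new}[n]\to T/w$ and $y[n]=w\,T_{new}[n]\to T$, the first claim.

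For the residual budget I would avoid arguing on $c[n]$ directly and instead reuse the first–order identity above in the form $\frac{w}{\tau}c[n]=T_{new}[n+1]-T_{new}[n]-\left(w+\frac{1}{\tau w}\right)\!\bigl(T-y[n]\bigr)$. Once convergence of $T_{new}[n]$ is established, the increment $T_{new}[n+1]-T_{new}[n]\to 0$, and $T-y[n]=T-w\,T_{new}[n]\to 0$; since $w\neq 0$ and $\tau$ is finite this forces $c[n]\to 0$, giving the second claim. A small bookkeeping point to record is that the pair $(T_{new}[-1],T_{new}[0])$ is well defined from the original initial data $(\eta[0],c[0],T_{new}[0])$, so the second–order recursion really is an equivalent reformulation of the coupled updates.

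I expect the stability step to be the delicate part. The bound $w^{2}\le 2$ is precisely what keeps $P(-1)\ge 0$, but the boundary value $w^{2}=2$ makes $z=-1$ a root of $P$ (marginal stability, with a persistent alternating mode $\propto(-1)^{n}$), so the clean convergence statement really requires $w^{2}<2$; in the write-up I would either state the theorem with strict inequality or note that $w^{2}=2$ corresponds to a $100\%$ rate overshoot and is never met in practice. Everything else — the elimination of $\eta$ and $c$, the fixed-point computation, and the three Jury evaluations — is routine algebra.
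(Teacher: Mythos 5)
Your proposal is correct and follows the same backbone as the paper's proof: both collapse the coupled updates into the same constant-coefficient second-order linear recursion (you write it for $T_{new}[n]$, the paper for $y[n]=w\,T_{new}[n]$, which is the same equation up to the factor $w$), identify the fixed point $y^{*}=T$, and obtain $c[n]\to 0$ by reading the residual budget off the update identity once convergence of the main sequence is established. The only genuine divergence is the stability step: the paper solves the recursion explicitly, exhibiting the two modes $\left(1-\frac{1}{\tau}\right)^{n}$ and $\left(1-w^{2}\right)^{n}$ with coefficients fixed by the initial conditions, whereas you apply the Jury test to the characteristic polynomial; your evaluations $P(1)=\frac{w^{2}}{\tau}$, $P(-1)=(2-w^{2})\left(2-\frac{1}{\tau}\right)$ and $a_{0}=(1-w^{2})\left(1-\frac{1}{\tau}\right)$ are all correct. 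Your route buys two small things. First, it is insensitive to root multiplicity and so covers the degenerate case $\tau w^{2}=1$, where the two roots coincide and the paper's closed-form coefficient $\frac{T(1-w)}{\tau w^{2}-1}$ is undefined (the general solution there needs an $n\,r^{n}$ term). Second, your remark about the boundary is right and sharpens the statement: at $w^{2}=2$ the root $1-w^{2}=-1$ carries a generically nonzero coefficient, so the alternating mode persists and the hypothesis should read $w^{2}<2$ (together with the implicit $w\neq 0$ and $\tau>1/2$, satisfied by the values of $\tau$ used in the paper); the paper's ``provided that $w^{2}\leq 2$'' is loose at that endpoint even under its own explicit solution.
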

\begin{proof}
\begin{align}
\label{yproof}
y[n] &= w \left( \eta[n] + \frac{c[n]}{\tau w}  \right) \notag \\ &= \left( 1- w^2 - \frac{1}{\tau} \right)y[n-1] + w^2T + \frac{T}{\tau} + w^2\frac{c[n-1]}{\tau}
\end{align}
However, from the definition of $c[n]$:
\begin{align}
\label{cproof}
c[n] = c[n-1] + T - y[n-1]
\end{align}
We can solve \eqref{yproof} for $c[n-1]$ and insert it in \eqref{cproof}.
\begin{align}
\label{ccrucial}
c[n] &= \frac{\tau}{w^2} \left( y[n] - \left( 1-w^2-\frac{1}{\tau} \right)y[n-1] -w^2T - \frac{T}{\tau} \right) \notag \\ &+ T - y[n-1]
\end{align}
We can recall that
\begin{align*}
y[n+1] &= \left( 1- w^2 - \frac{1}{\tau} \right)y[n] + w^2T + \frac{T}{\tau} + w^2\frac{c[n]}{\tau} \\
&= \left( 2-w^2-\frac{1}{\tau} \right)y[n] \\&- \left( 1+\frac{1-\tau}{\tau}w^2 - \frac{1}{\tau} \right)y[n-1] + \frac{w^2}{\tau}T
\end{align*}
The general solution to that difference equation, considering the initial conditions $y[0]=wT$ and $y[1]=wT+w^2T-w^3T+\frac{T-wT}{\tau}$, is
\begin{align*}
y[n] &= \frac{T(1-w)}{\tau w^2-1} \left( 1-\frac{1}{\tau} \right)^n \\ &+ \left[ Tw - T - T \frac{1-w}{\tau w^2-1} \right] \left( 1-w^2 \right)^n + T
\end{align*}
It is easy to check the limit:
\begin{align*}
\lim_{n\rightarrow \infty} y[n] = T ,
\end{align*}
provided that $w^2 \leq 2$. Moreover we can take the limit of \eqref{ccrucial} to check budget convergence:
\begin{align*}
\lim_{n\rightarrow \infty} c[n] = \frac{\tau}{w^2} \left( T - \left( 1-w^2-\frac{1}{\tau} \right)T -w^2T - \frac{T}{\tau} \right) = 0
\end{align*}
\end{proof}

\begin{theorem}[\bf{Cost minimization of memory-1 method}]
\label{costmin}
Let the rate controller obey the input-output relationship $y[n] = w[n]T_{new}[n]$, and let $T_{new}$ be updated as in \eqref{newtargetbpp} with $\mathcal{I}=\lbrace n \rbrace$. Then, update \eqref{NTupdate} is a gradient descent step towards the minimization of 
\begin{align}
\label{costtrackingbudget}
J = \underbrace{ \Bigg( T - y[n] \Bigg)^2 }_{\textsc{tracking}} + \underbrace{ \Bigg( T - y[n] + \frac{2c[n]}{\tau} \Bigg)^2 }_{\textsc{budget}}
\end{align}
\end{theorem}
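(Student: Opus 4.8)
The plan is to show that \eqref{NTupdate} coincides, up to the choice of step size, with one step of gradient descent applied to the cost $J$ in \eqref{costtrackingbudget}, where the optimization variable at slice $n$ is the tracking estimate $\eta[n]$. First I would make explicit how $y[n]$ depends on $\eta[n]$: shifting \eqref{newtargetbpp} by one index gives $T_{new}[n] = \eta[n] + \frac{c[n]}{\tau\,\bar w[n-1]}$, and in the memory-1 case $\mathcal{I}=\lbrace n \rbrace$ we also have $\bar w[n]=w[n]$. Since $y[n]=w[n]\,T_{new}[n]$, the map $\eta[n]\mapsto y[n]$ is affine with slope $\partial y[n]/\partial\eta[n]=w[n]$.

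Next I would fix which quantities are held constant when differentiating $J$ at slice $n$. From $c[n]=c[n-1]+T-y[n-1]$, the residual budget $c[n]$ is a function of the already-coded slices only, hence it does not depend on the current iterate $\eta[n]$; the same holds for $\bar w[n-1]$. Treating $c[n]$ as a constant, a direct computation then gives
\begin{align*}
\frac{\partial J}{\partial\eta[n]} &= 2\big(T-y[n]\big)\big(-w[n]\big) + 2\Big(T-y[n]+\tfrac{2c[n]}{\tau}\Big)\big(-w[n]\big) \\
&= -4\,w[n]\Big(T-y[n]+\tfrac{c[n]}{\tau}\Big).
\end{align*}

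Then I would write the gradient step $\eta[n+1]=\eta[n]-\mu\,\partial J/\partial\eta[n]$ and observe that choosing the step size $\mu=\tfrac14$ (any other convention simply rescales $\mu$) yields exactly
\begin{align*}
\eta[n+1]=\eta[n]+w[n]\Big(T-y[n]+\tfrac{c[n]}{\tau}\Big),
\end{align*}
which is \eqref{NTupdate} with $\bar w[n]=w[n]$. This completes the argument.

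The only real subtlety, and hence the main obstacle, is the bookkeeping of the partial derivative: one must be careful to freeze $c[n]$ and $\bar w[n-1]$ (they are determined by slices already encoded) and to use precisely the slope $w[n]$ coming from $y[n]=w[n]T_{new}[n]$ and from the fact that $T_{new}[n]$ depends on $\eta[n]$ with unit coefficient. Once those two points are settled, matching the numerical constant in front of the correction term amounts to nothing more than selecting the step size, so no further estimates are needed.
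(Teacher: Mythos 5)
Your proposal is correct and follows essentially the same route as the paper: treat $\eta[n]$ as the descent variable with $c[n]$ frozen, obtain $\partial J/\partial\eta[n] = -4\,w[n]\bigl(T-y[n]+\tfrac{c[n]}{\tau}\bigr)$ via $\partial y[n]/\partial\eta[n]=w[n]$, and recover \eqref{NTupdate} with step size $\tfrac14$. The only difference is cosmetic: you apply the chain rule directly, whereas the paper expands the full quadratic in $\eta[n]$ before differentiating, arriving at the same gradient.
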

\begin{proof}
\begin{align*}
J &= T^2 + y^2[n] -2Ty[n] + 4\frac{c^2[n]}{\tau^2} + T^2 + y^2[n]\\
&- 4\frac{c[n]}{\tau}y[n] + 4\frac{c[n]}{\tau}T - 2Ty[n]\\
&= 2T^2 + 2w^2[n]\xi^2 - 2Tw[n]\xi + 4\frac{c^2[n]}{\tau^2}  - 4\frac{c[n]}{\tau}w[n]\xi\\
&+ 4\frac{c[n]}{\tau}T -2T\xi w[n] + 2w^2[n]\eta^2[n] + 4w^2[n]\eta[n]\xi\\
&- 2Tw[n]\eta[n] 4\frac{c[n]}{\tau}w[n] \eta[n] - 2T\eta[n]w[n]
\end{align*}
\begin{align*}
\frac{\mathrm{d}J}{\mathrm{d}(\eta[n])} &= 4w^2[n]\eta[n] + 4\xi w^2[n] \\ &- 2Tw[n] - 4\frac{c[n]}{\tau}w[n] - 2Tw[n]
\end{align*}
The gradient descent update equation is
\begin{align*}
\eta[n+1] &= \eta[n] -\alpha \frac{\mathrm{d}J}{\mathrm{d}(\eta[n])} \\ &= \eta[n] - 4\alpha w[n] \left( y[n] - T - \frac{c[n]}{\tau} \right)
\end{align*}
Thus, setting $\alpha=\frac{1}{4}$ we obtain \eqref{NTupdate}.
\end{proof}

\section{Hybrid near-lossless rate control}
\label{hybrid}
\begin{figure*}
\subfigure[No $\mathrm{CLIP}$]{
\includegraphics[width=0.32\textwidth]{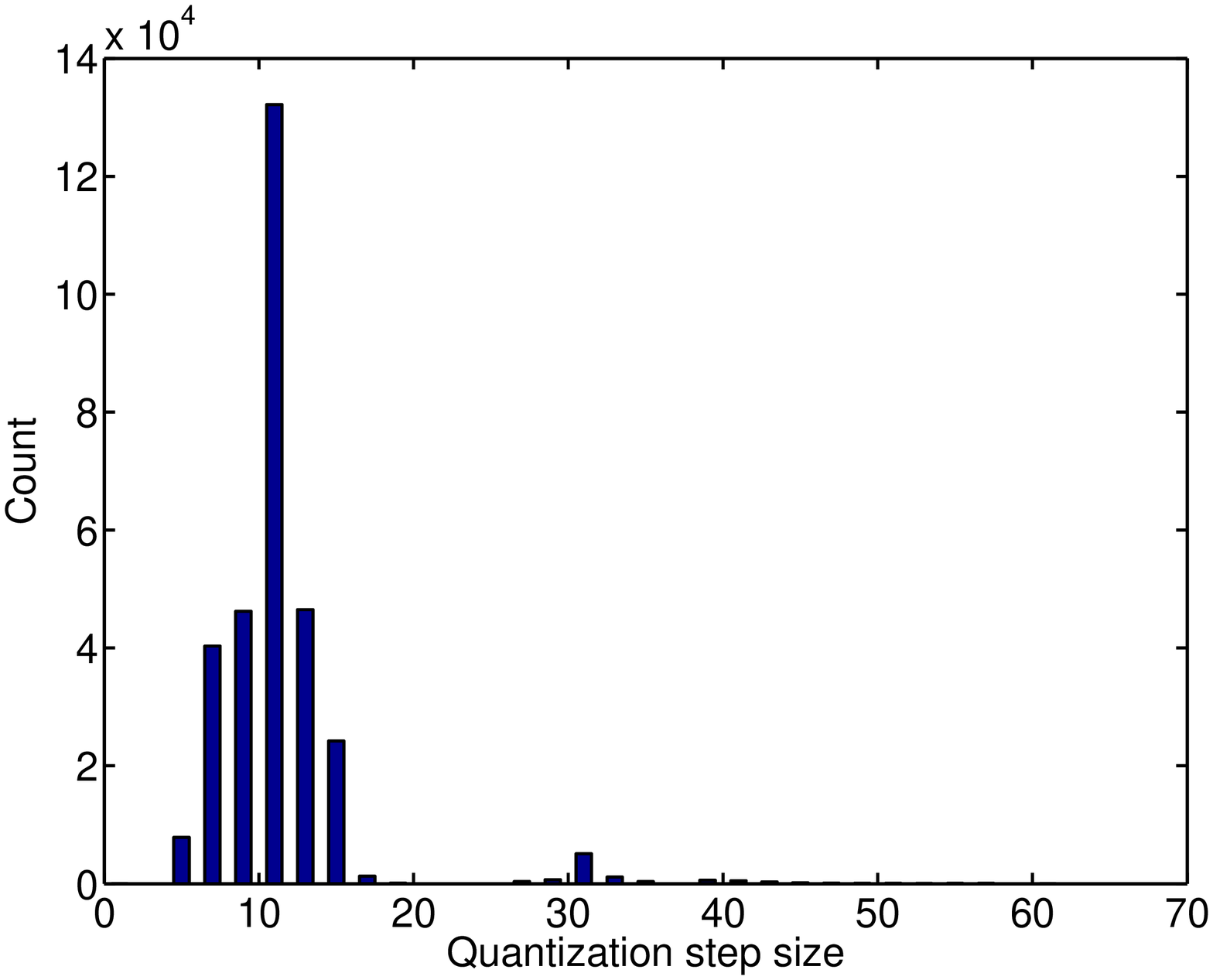}}
\subfigure[$\mathrm{CLIP}=21$]{
\includegraphics[width=0.32\textwidth]{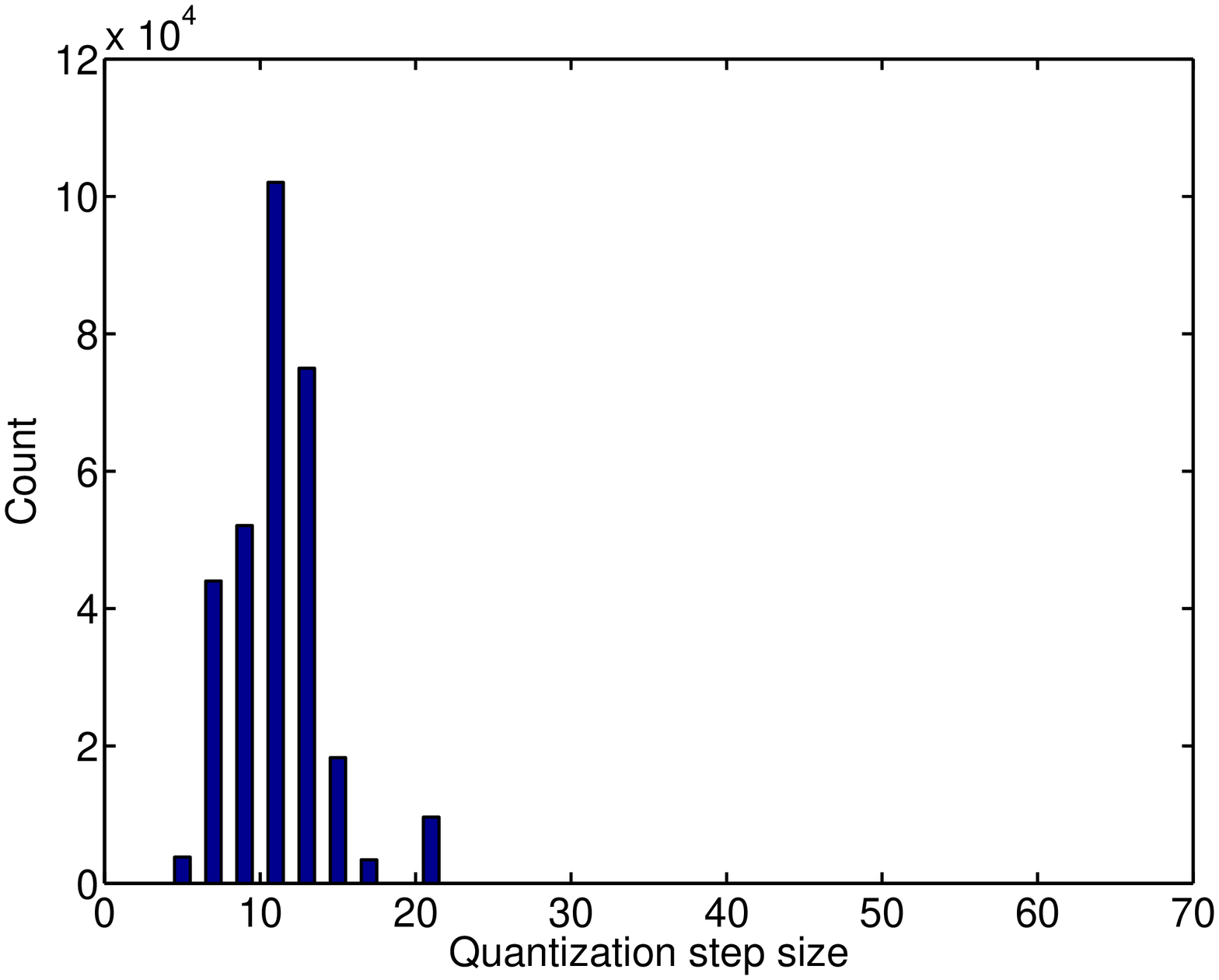}}
\subfigure[$\mathrm{CLIP}=11$]{
\includegraphics[width=0.32\textwidth]{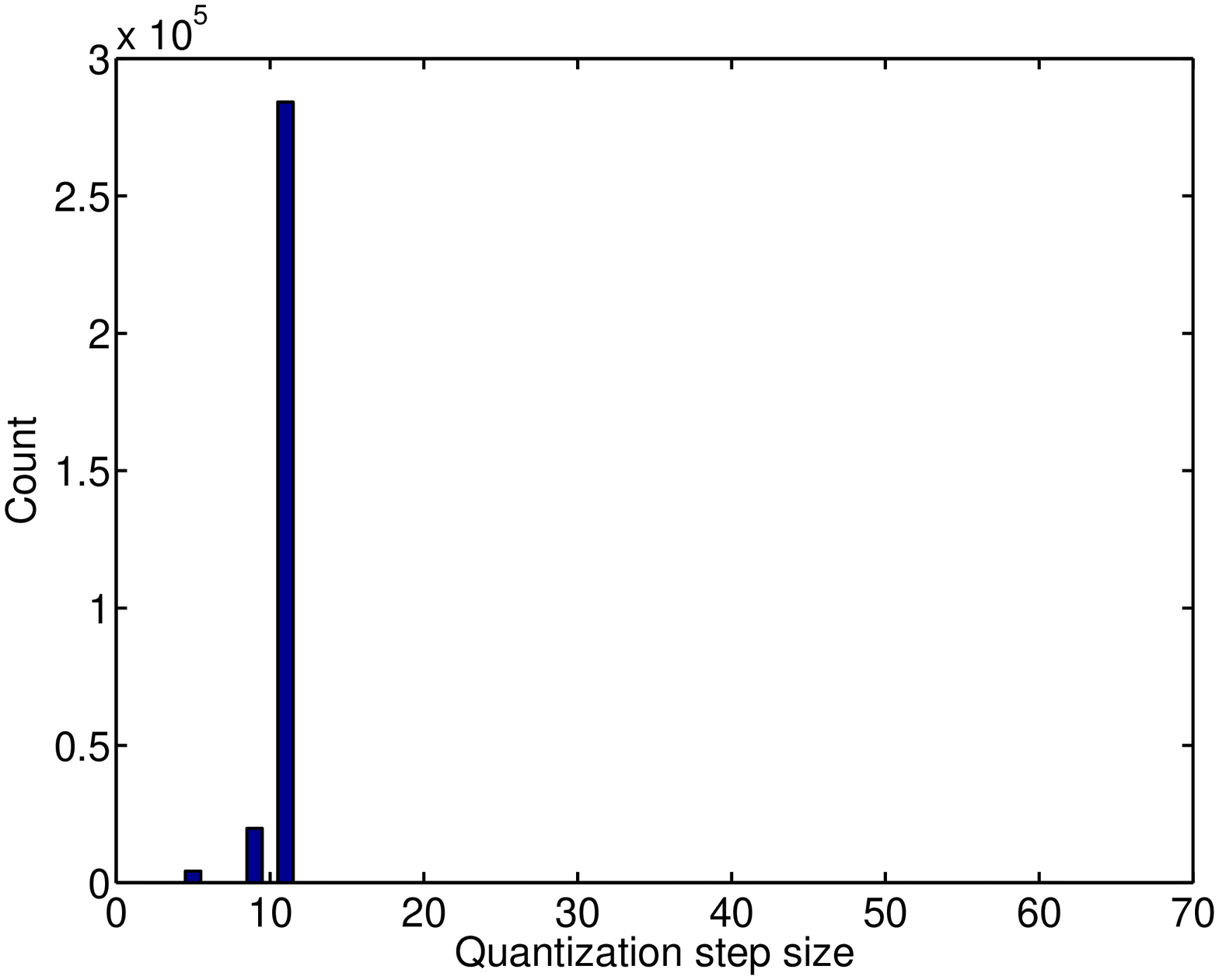}}
\caption{AVIRIS sc0\_raw. (a) rate$=3.0052$ bpp, MAD=30, SNR=62.25 dB; (b) rate$=3.0046$ bpp, MAD=10, SNR=62.85 dB; (c) rate$=2.9968$ bpp, MAD=5, SNR=63.39 dB}
\label{avirisclip}
\end{figure*} 
The proposed rate control algorithm opens the way for an interesting hybrid operating mode in which one can simultaneously constrain target rate and maximum distortion. This significantly differs from traditional operating modes in which one can either specify the rate but has no control over the per-pixel maximum error (as it typically happens in rate-controlled transform coding approaches) or in which one specifies the maximum error but has no control over the rate (as it is easily done in near-lossless predictive schemes). The implementation of such hybrid mode is trivial by using the proposed rate controller because it is sufficient to limit the maximum quantization step size allowed in the $l_1$ projector and in Selective Diet. If such specification is compatible with finding an allocation of quantization step sizes that yields the prescribed target rate, then the algorithm successfully controls both the rate and the maximum distortion.

Fig. \ref{avirisclip} reports the results of some experiments (see Sec. \ref{sec:123extension}-\ref{results} for more details on the test image) that graphically show the impact of constraining the maximum quantization step size (called $\mathrm{CLIP}$) on the distribution of quantization steps and on the rate and quality of the encoded image. In this case the controller successfully provides the desired rate even with the very demanding constraint $\mathrm{CLIP}=11$. Also, notice the improvement in terms of MAD and SNR obtained by the hydrid mode. The higher SNR obtained by enforcing a constraint on the maximum error should not be surprising. In fact, the $l_1$ projector and Selective Diet alone have no guarantee of optimality and enforcing an additional constraint allows to shrink the solution space, eliminating suboptimal allocations.  Finally, if the user were to demand $\mathrm{CLIP}=5$, she would actually get MAD=2 but the controller would be unable to provide the target rate of 3 bpp and, in fact, provides $4.0586$ bpp.

\section{Extension of CCSDS-123 to near-lossless and lossy compression with rate control}
\label{sec:123extension}
\subsection{Review of CCSDS-123}
The Consultative Committee for Space Data Systems (CCSDS) has recently developed the CCSDS-123 recommendation, intended for lossless compression of multispectral and hyperspectral images. CCSDS-123 is based on the Fast Lossless compression algorithm \cite{FL} \cite{calibrationartifacts}, which is a predictive method. The algorithm computes a local sum $\sigma_{z,y,x}$, obtained from a causal neighborhood of the pixel. A weighted combination of the local sums in the $P$ previous bands yields the predicted pixel value. The algorithm adapts the weights using the sign algorithm \cite{chosign}, which is a low-complexity solution for the implementation of a least-mean-square filter. 

Let $s_{z,y,x}$ denote the pixel value at position $(x,y,z)$, then the encoder computes:
\begin{align*}
\hat{d}_{z,y,x} = \mathbf{W}_{z,y,x}^T \mathbf{U}_{z,y,x} = \mathbf{W}_{z,y,x}^T \left[\begin{array}{c}
4s_{z,y-1,x}-\sigma_{z,y,x}\\
4s_{z,y,x-1}-\sigma_{z,y,x}\\
4s_{z,y-1,x-1}-\sigma_{z,y,x}\\
4s_{z-1,y,x}-\sigma_{z-1,y,x}\\
\vdots\\
4s_{z-P,y,x}-\sigma_{z-P,y,x}
\end{array}\right]
\end{align*}
A \emph{scaled predicted sample}  $\tilde{s}_{z,y,x}$ is calculated from $\hat{d}_{z,y,x}$. 
The prediction residual is computed as $\Delta_{z,y,x} = s_{z,y,x} - \left\lfloor \frac{\tilde{s}_{z,y,x}}{2} \right\rfloor$ and then mapped to a positive integer $\delta_{z,y,x}$ to be entropy encoded.
For further details, we refer the reader to the CCSDS-123 Blue Book \cite{ccsds123} and to the paper by Aug\'{e} \emph{et al.} \cite{ccsds123performance} for a more throughout explanation of the encoder parameters and their impact on performance.

\subsection{Near-lossless extension}
Extending the compression mechanism to near-lossless encoding simply requires to introduce a quantizer in the prediction loop. In particular, we use a uniform scalar quantizer to quantize the prediction residual $\Delta_{z,y,x}$ into $\hat{\Delta}_{z,y,x} = \mathrm{sgn}\left( \Delta_{z,y,x} \right)\cdot \left\lfloor  \frac{\vert \Delta_{z,y,x} \vert + (Q-1)/2}{Q} \right\rfloor$. The quantized value is then mapped to a positive integer and sent to the entropy coding stage. In order to have synchronization with the decoder, we must consider the dequantized value $Q \hat{\Delta}_{z,y,x}$ for weight update. The near-lossless encoder uses a single quantization step size for the whole image.

\subsection{Rate-controlled lossy extension}
The rate-controlled version of the algorithm uses the proposed rate control method to assign a different quantization step size to each block in the image. Assuming that the encoder proceeds in a Band Interleaved by Line (BIL) order, the rate control procedure is called whenever the current pixel belongs to the first band and it is at the beginning of a new slice (\emph{i.e.}, position $z=0$, $y=k\cdot BS$, $x=0$). As explained in the previous sections, the rate controller first tries to encode $\mathrm{ESTLINES}$ lines (with all their spectral bands) in a lossless mode in order to estimate the variance of the prediction residuals. Once the variance is estimated and the allocation of quantization steps is performed, the encoder backtracks to position $(0,k\cdot BS,0)$, discarding all the weight updates done in the meanwhile and starts the actual encoding pass of the slice. Similarly to the near-lossless mode, the encoder now computes the quantized prediction residuals $\hat{\Delta}_{z,y,x}$, but now employing the quantization steps calculated by the controller for each block.

It is important to notice that the chosen quantization step sizes must be written in the header of the compressed file for usage at the decoder side. In order to keep the overhead low we propose to use a differential encoding strategy adopting the Exp-Golomb code \cite{expgolomb}. Differential encoding amounts to encoding only differences between two successive quantization steps and, since they are expected to be close to each other, some compression is obtained. A simple universal code such as the Exp-Golomb code of order zero is then used to compress the differences.

Finally, formulas \eqref{rate_usq} and \eqref{dist_usq} can be implemented by means of lookup tables. It can be noticed that the rate depends only on $\Lambda Q$ and that the distortion can be rewritten as the product of a function of $\Lambda Q$ and $Q^2$. We have verified that two lookup tables of roughly $45000$ integer values each are sufficient to ensure the correct behavior of the algorithm. The values in the rate table can be represented using 14 bits per value, while the distortion values need 13 bits. The total memory occupation of the two tables is thus about 152 kB.

\vspace*{-0.25cm}
\subsection{Range encoder}
\label{sec:range}
The CCSDS-123 recommendation defines an adaptive coding approach using Golomb-Power-of-2 codes, mainly due to its low complexity and good performance, as well as the existence of an earlier standard (CCSDS 121.0-B \cite{ccsds121}) using the Rice coding algorithm, embedded in the block-adaptive mode.

We propose a different entropy coding stage based on the range coder \cite{martinrange}. The range coder is essentially a simplified arithmetic encoder. Such a block coder is needed in order to achieve rates lower than 1 bpp, as the minimum codeword length for the Golomb code is 1 bit. Moreover, a higher performance entropy coder improves the effectiveness of the rate controller, by limiting the suboptimality introduced at this stage. 
For efficiency reasons, the proposed range coder keeps four separate models for each band for the prediction residuals, as described in \cite{losslessLUT}.

\vspace*{-0.1cm}
\section{Numerical results}
\label{results}
We have performed extensive tests on images extracted from the corpus defined by the MHDC working group of the CCSDS for performance evaluation and testing of compression algorithms. A total of 47 images is used to generate the ensemble statistics, while for brevity we report numerical results for a smaller subset. The whole corpus comprises images of various nature, from ultraspectral images captured by IASI and AIRS sensors, through hyperspectral images captured by CASI, SFSI, AVIRIS and Hyperion sensors, to multispectral images captured by MODIS, Landsat, Vegetation, MSG, Pleiades and SPOT5 sensors. Table \ref{tab:testimages} reports details about the images used in the tests and the number of bands $P$ used for prediction. The images with the \textsc{NUC} suffix present Non-Uniformity Correction, \emph{i.e.}, a form of compensation of the different gains of the lines of the image, performed by means of a median filter, as described in \cite{pot}.

The tests have multiple goals. First, we want to analyze the accuracy of the rate control algorithm, assessing how close the actual rate of the compressed image is with respect to the specified target. Second, we study the rate-distortion performance of the algorithm by drawing the full rate-distortion curve in order to compare it against the rate-distortion curve obtained by the near-lossless version of the encoder. This is known to be the optimal quantization step selection for a Gaussian source, but does not provide rate control, although many rate-distortion points are indeed achievable. We use this curve as an upper performance bound in order to estimate how close the proposed rate control algorithm can get to the ideal solution. Finally, we compare the performance of the proposed extension of CCSDS-123 to lossy compression with rate control against a state-of-the-art transform coder intended for onboard compression.

\begin{table}
\caption{Test Images}
\label{tab:testimages}
\centering
\begin{tabular}{c c c c c}
Image & Rows & Columns & Bands & $P$ \\
\hline
\hline
\textsc{AVIRIS sc0\_raw} & 512 & 680 & 224 & 15 \\
\hline
\textsc{AIRS gran9} & 135 & 90 & 1501 & 10 \\
\hline
\textsc{CASI-t0477f06-nuc} & 1225 & 406 & 72 & 2\\ 
\hline
\textsc{CRISM-sc167-nuc} & 510 & 640 & 545 & 3 \\ 
\hline
\textsc{CRISM-sc182-nuc} & 450 & 320 & 545 & 3 \\ 
\hline
\textsc{CRISM-sc214-nuc} & 510 & 640 & 545 & 3 \\ 
\hline
\textsc{frt00009326\_07\_vnir} & 512 & 640 & 107 & 3 \\ 
\hline
\textsc{Geo\_Sample\_Flatfielded} & 1024 & 256 & 242 & 10 \\ 
\hline
\textsc{M3targetB-nuc} & 512 & 640 & 260 & 3 \\ 
\hline
\textsc{M3targetB} & 512 & 640 & 260 & 3 \\ 
\hline
\textsc{MODIS-MOD01\_250m} & 8120 & 5416 & 2 & 1 \\ 
\hline
\textsc{MODIS-MOD01\_500m} & 4060 & 2708 & 5 & 4 \\ 
\hline
\textsc{MODIS-MOD01day} & 2030 & 1354 & 14 & 2 \\ 
\hline
\textsc{MODIS-MOD01night} & 2030 & 1354 & 17 & 4 \\ 
\hline
\textsc{montpellier} & 224 & 2456 & 4 & 3 \\ 
\hline
\textsc{mountain} & 1024 & 1024 & 6 & 5 \\ 
\hline
\textsc{t0477f06\_raw} & 1225 & 406 & 72 & 2 \\ 
\hline
\textsc{toulouse\_spot5\_xs\_extract1} & 1024 & 1024 & 3 & 3 \\ 
\hline
\textsc{vgt\_1b} & 10080 & 1728 & 4 & 3 \\ 
\hline
\end{tabular}
\end{table}

\subsection{Complexity considerations}
Before presenting the experimental performance of the proposed algorithm, we analyze its computational complexity both theoretically and on a real implementation. 

The lossless version of the compression algorithm is quite similar to the CCSDS-123 recommendation, with the exception of entropy coding stage, now replaced by the range coder. Its complexity and the one of the near-lossless scheme are therefore just marginally higher than CCSDS-123. The rate control algorithm has three main sources of complexity:
\begin{itemize}
\item the estimation of the variance of unquantized prediction residuals
\item the $l_1$ projector
\item the Selective Diet optimization algorithm
\end{itemize}
We remarked in Sec. \ref{sec:l1} that the $l_1$ projector has complexity $\mathcal{O}(N_B\log N_B)$, essentially due to the sorting procedure. The Selective Diet algorithm also has a sorting step as the main source of complexity. After the blocks in the current slice are sorted according to the value of the cost function, a linear scan is performed to optimize the quantization step sizes. This basic operation is repeated for $N_{iter}$ iterations, hence with good approximation we can say that Selective Diet has $\mathcal{O}(N_{iter}(N_B\log N_B + N_B))$ complexity. However, it is typically observed that the number of required iterations is very low (around 5 to 10) and can be bounded to a predefined value.

We also profiled our C-language implementation of the compression algorithm and compared lossless encoding against rate-controlled encoding in terms of running times. We used the \emph{aviris sc0\_raw} image for our test, as it is one of the biggest in the dataset. Rate control was set to $3$ bpp with \textsc{MODE A}. The running time of the lossless encoder was $72.62$ seconds, while the rate-controlled encoder took $80.48$ seconds. The time spent writing to file was removed from both measurements in order to avoid any bias due to different file sizes. It can be noticed that the overhead of the rate controller is around $10\%$. Careful profiling of the code suggests that this overhead is due for $65\%$ ($5.11$ sec.) to variance estimation, while only $22\%$ ($1.73$ sec.) to optimization ($l_1$ projector and Selective Diet). The remaining $13\%$ is due to other inefficiencies in the code, which is not very optimized. This result confirms our intuition, presented in Sec. \ref{sec:models}, that variance estimation is the main source of complexity, and so the number of lines (\textsc{ESTLINES}) used for this task must be chosen carefully. All the results presented in this paper were obtained with \textsc{ESTLINES}$=2$.

\subsection{Accuracy of rate control}

\begin{figure*}[ht!]
\addtolength{\subfigcapskip}{-0.2cm}
\vspace*{-0.1cm}
\centering
\subfigure[1 bpp]{
\includegraphics[width=0.3\textwidth]{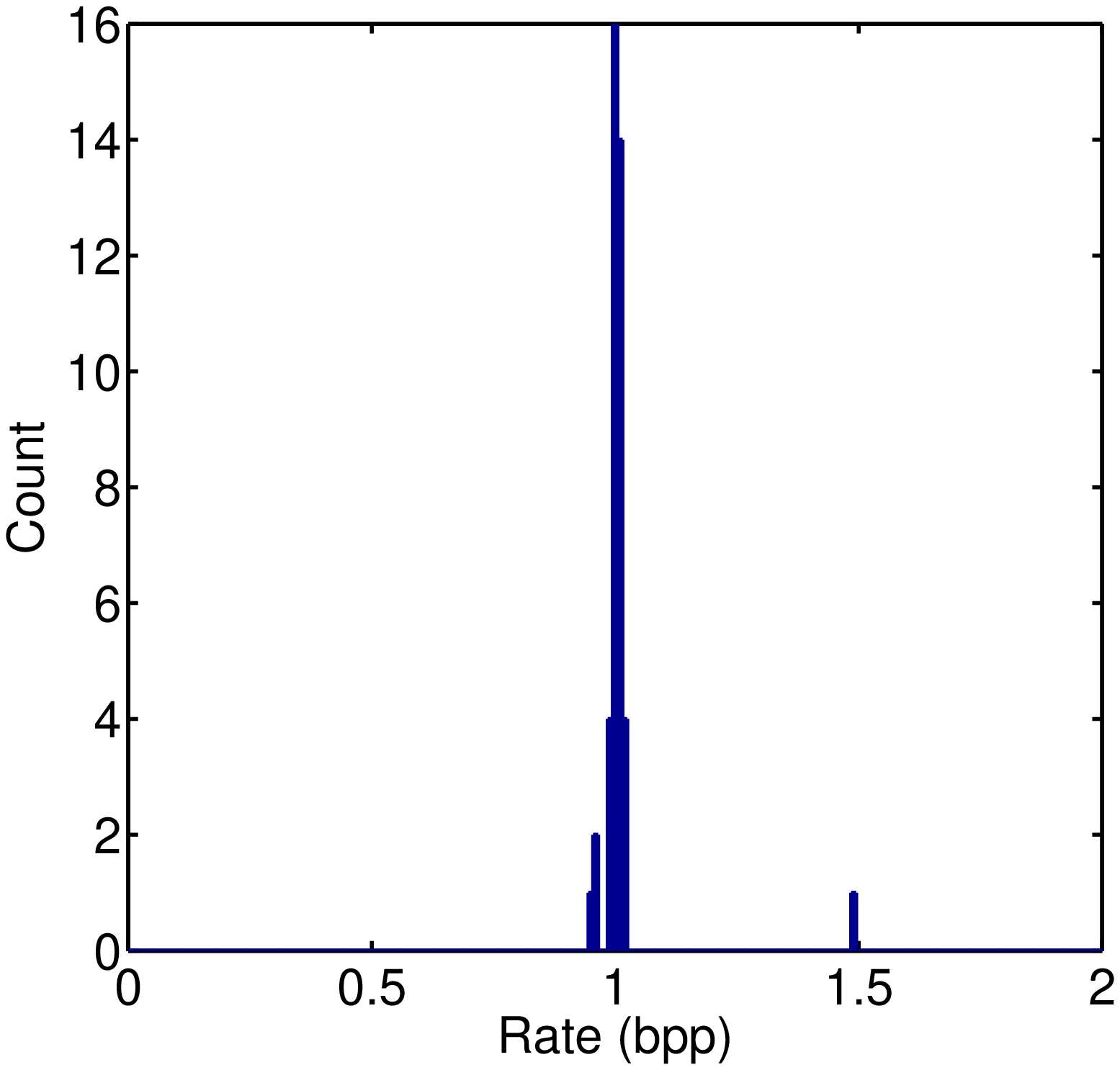}} 
\subfigure[2 bpp]{
\includegraphics[width=0.3\textwidth]{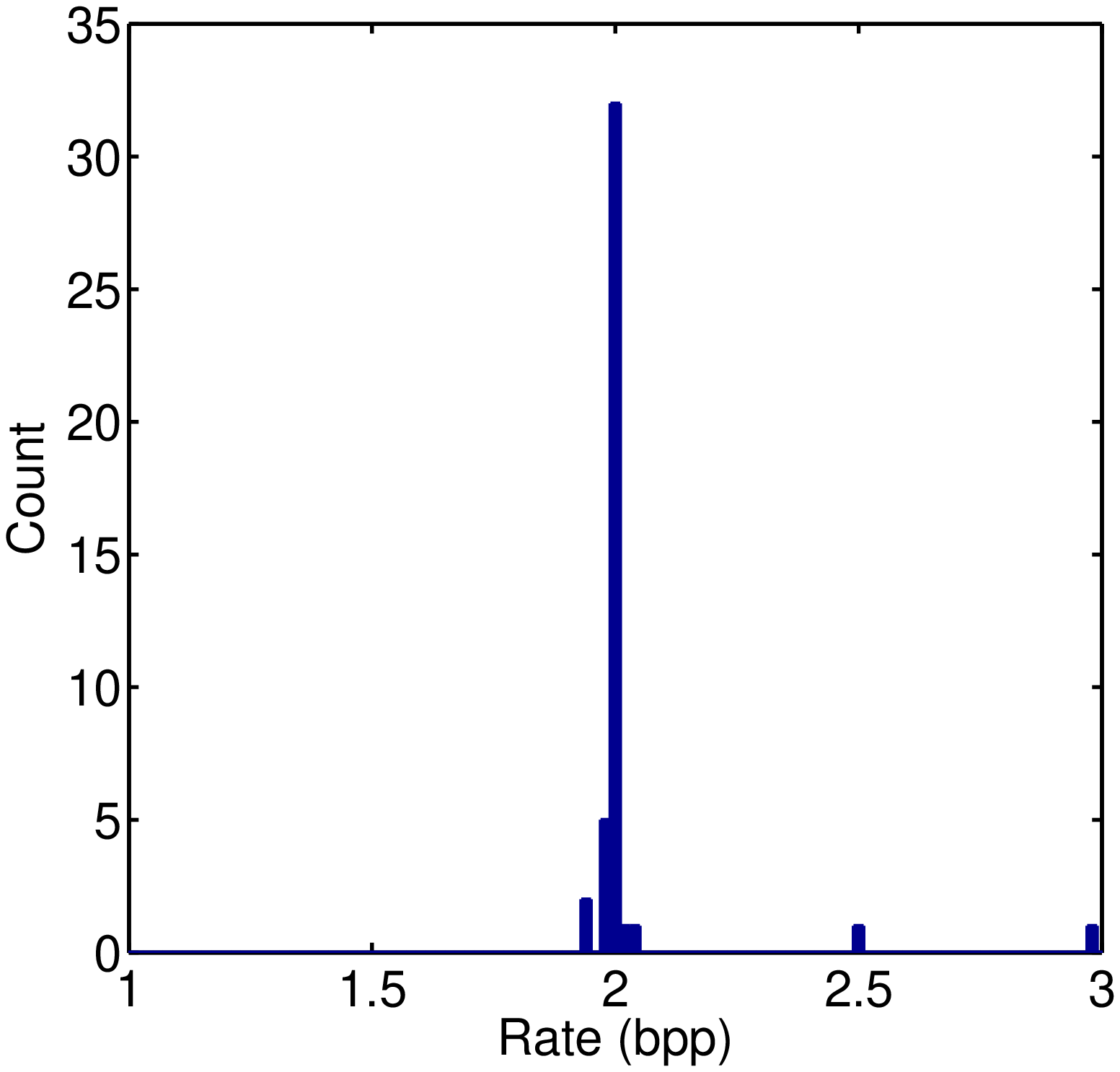}} 
\subfigure[3 bpp]{
\includegraphics[width=0.3\textwidth]{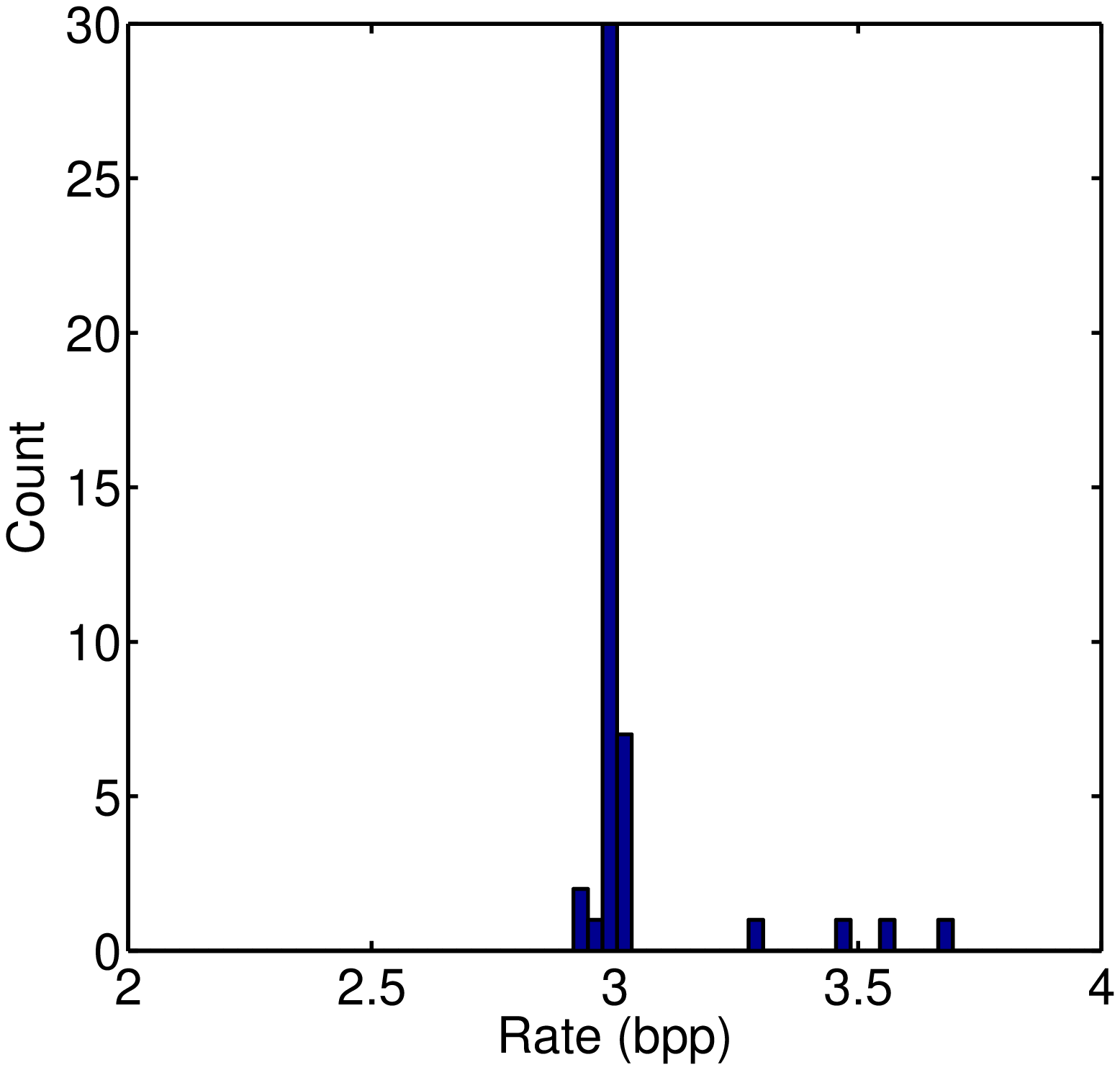}}
\vspace*{-0.2cm}
\caption{Histograms of output rates for mode B}
\label{histograms}
\end{figure*}

\begin{table*}
\caption{Output rates}
\centering
\begin{tabular}{c| c| c| c c c c}
Image & Size (lines$\times$pixels$\times$bands) & Mode & \emph{1 bpp} & \emph{2 bpp} & \emph{3 bpp} & \emph{4 bpp}\\
\hline
\multirow{2}{*}{\textsc{AVIRIS sc0\_raw}} & \multirow{2}{*}{$512\times680\times224$} & A & 0.951 & 1.963 & 2.955 & 3.959 \\
& & B & 1.004 & 1.995 & 3.006 & 3.994\\  
\hline
\multirow{2}{*}{\textsc{AIRS gran9}} & \multirow{2}{*}{$135\times90\times1501$} & A & 0.948 & 1.939 & 2.963 & 3.971 \\
& & B & 0.959 & 1.976 & 2.962 & 3.962\\ 
\hline
\multirow{2}{*}{\textsc{CASI-t0477f06-nuc}} & \multirow{2}{*}{$1225\times406\times72$} & A & 0.881 & 1.944 & 2.924 & 3.981 \\
& & B & 0.999 & 1.994 & 2.995 & 3.988\\ 
\hline
\multirow{2}{*}{\textsc{CRISM-sc167-nuc}} & \multirow{2}{*}{$510\times640\times545$} & A & 0.678 & 1.706 & 2.677 & 3.690 \\
& & B & 1.003 & 1.993 & 2.986 & 3.991 \\ 
\hline
\multirow{2}{*}{\textsc{CRISM-sc182-nuc}} & \multirow{2}{*}{$450\times320\times545$} & A & 0.680 & 1.698 & 2.691 & 3.696 \\
& & B & 1.002 & 1.991 & 2.985 & 3.973 \\ 
\hline
\multirow{2}{*}{\textsc{frt00009326\_07\_vnir}} & \multirow{2}{*}{$512\times640\times107$} & A & 0.607 & 1.427 & 2.231 & 3.140 \\
& & B & 1.000 & 1.999 & 3.001 & 3.994 \\ 
\hline
\multirow{2}{*}{\textsc{Geo\_Sample\_Flatfielded}} & \multirow{2}{*}{$1024\times256\times242$} & A & 0.912 & 2.070 & 3.124 & 3.998 \\
& &  B & 0.988 & 1.987 & 2.983 & 3.971 \\ 
\hline
\multirow{2}{*}{\textsc{M3targetB-nuc}} & \multirow{2}{*}{$512\times640\times260$} & A & 0.889 & 1.974 & 3.043 & $3.834^{(*)}$ \\
& & B & 1.000 & 1.997 & 2.998 & $3.834^{(*)}$ \\ 
\hline
\multirow{2}{*}{\textsc{MODIS-MOD01\_250m}} & \multirow{2}{*}{$8120\times5416\times2$} & A & 0.909 & 1.997 & 2.939 & 3.839 \\
& & B & 1.014 & 2.009 & 3.006 & 4.004 \\ 
\hline
\multirow{2}{*}{\textsc{MODIS-MOD01day}} & \multirow{2}{*}{$2030\times1354\times14$} & A & 1.042 & 2.045 & 2.996 & 3.985 \\
& & B & 1.014 & 2.005 & 2.998 & 3.986 \\ 
\hline
\multirow{2}{*}{\textsc{montpellier}} & \multirow{2}{*}{$224\times2456\times4$} & A & 0.959 & 2.122 & 3.123 & 4.105 \\
& & B & 1.025 & 2.035 & 3.030 & 4.032 \\ 
\hline
\multirow{2}{*}{\textsc{mountain}} & \multirow{2}{*}{$1024\times1024\times6$} & A & 0.735 & 1.935 & 2.970 & $3.793^{(*)}$ \\
& & B & 1.002 & 2.003 & 3.003 & $3.793^{(*)}$ \\ 
\hline
\multirow{2}{*}{\textsc{t0477f06\_raw}} & \multirow{2}{*}{$1225\times406\times72$} & A & 1.138 & 1.971 & 2.935 & 3.987 \\
& & B & 1.016 & 1.994 & 2.995 & 3.993 \\ 
\hline
\multirow{2}{*}{\textsc{toulouse\_spot5\_xs\_extract1}} & \multirow{2}{*}{$1024\times1024\times3$} & A & 0.714 & 1.815 & 2.805 & 3.802 \\
& & B & 1.010 & 2.002 & 2.999 & 3.997 \\ 
\hline
\multirow{2}{*}{\textsc{vgt\_1b}} & \multirow{2}{*}{$10080\times1728\times4$} & A & 0.630 & 1.813 & 2.878 & 3.914 \\
& & B & 1.009 & 2.004 & 3.002 & 4.001 \\ 
\hline
\end{tabular}
\\
\vspace*{0.4cm}
(*) : lossless
\label{tab:rates}
\end{table*}

In this section we show some results concerning the accuracy of the rate controller in terms of output rate. The tests are conducted for various target rates, and for the two operating modes of the algorithm: A and B. The predictor defined in the CCSDS-123 standard is used in the full prediction mode and with neighbour-oriented local sums. Square blocks of size $16\times16$ are used but the variance of the unquantized prediction residuals is obtained by running the lossless encoder on 2 lines only. This allows to buffer only two spectral lines at any given time, avoiding the need of large onboard memory buffers. Table \ref{tab:rates} reports a selection of the test images and the output rates obtained for the specified target rates. While later we will report full rate-distortion results, this test aims at assessing the accuracy achieved at obtaining a given target rate. It can be noted that the operating mode A is typically less accurate than mode B. Nonetheless it can still get very good accuracy in many cases, and, as explained in Sec. \ref{RDnumericalperformance}, it potentially has better rate-distortion characteristics. Mode B always has remarkably good accuracy, thanks to the information on the actual number of bits used to encode previous slices. Moreover, it can be seen that the algorithm performs equally well on both hyperspectral and multispectral images.

Furthermore, Fig. \ref{histograms} reports histograms of the actual rate obtained by mode B on a total of 47 images belonging to the test set of the CCSDS. The bin width is 1\% of the target rate. It should be noticed that, for the histogram at 3 bpp, some of the images were encoded without losses using a rate lower than the target, hence they have not been considered in the histograms. Notice that many images in the test set reach accuracy as good as 1\% or less. We remark that the rate control results are consistent throughout this large test set and only few images failed to be encoded with good accuracy. This is due to the severe noise affecting those images, causing the predictor to have low performance, and consequently, the prediction residuals exhibit large deviations from the model we assumed.

\subsection{Rate-distortion performance}
\label{RDnumericalperformance}
\begin{figure*}[ht!]
\subfigure[]{
\includegraphics[width=0.32\textwidth]{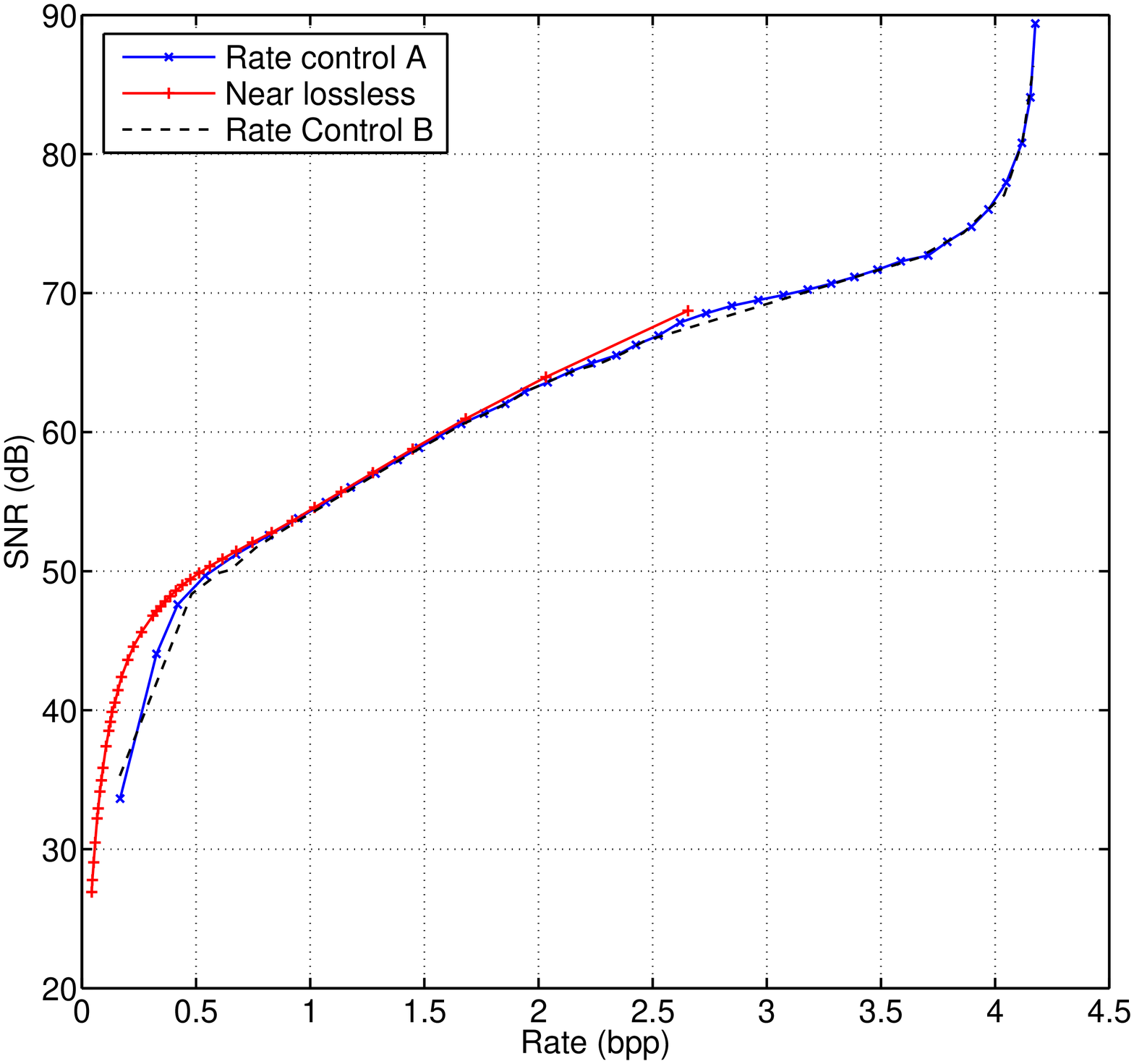}} 
\subfigure[]{
\includegraphics[width=0.32\textwidth]{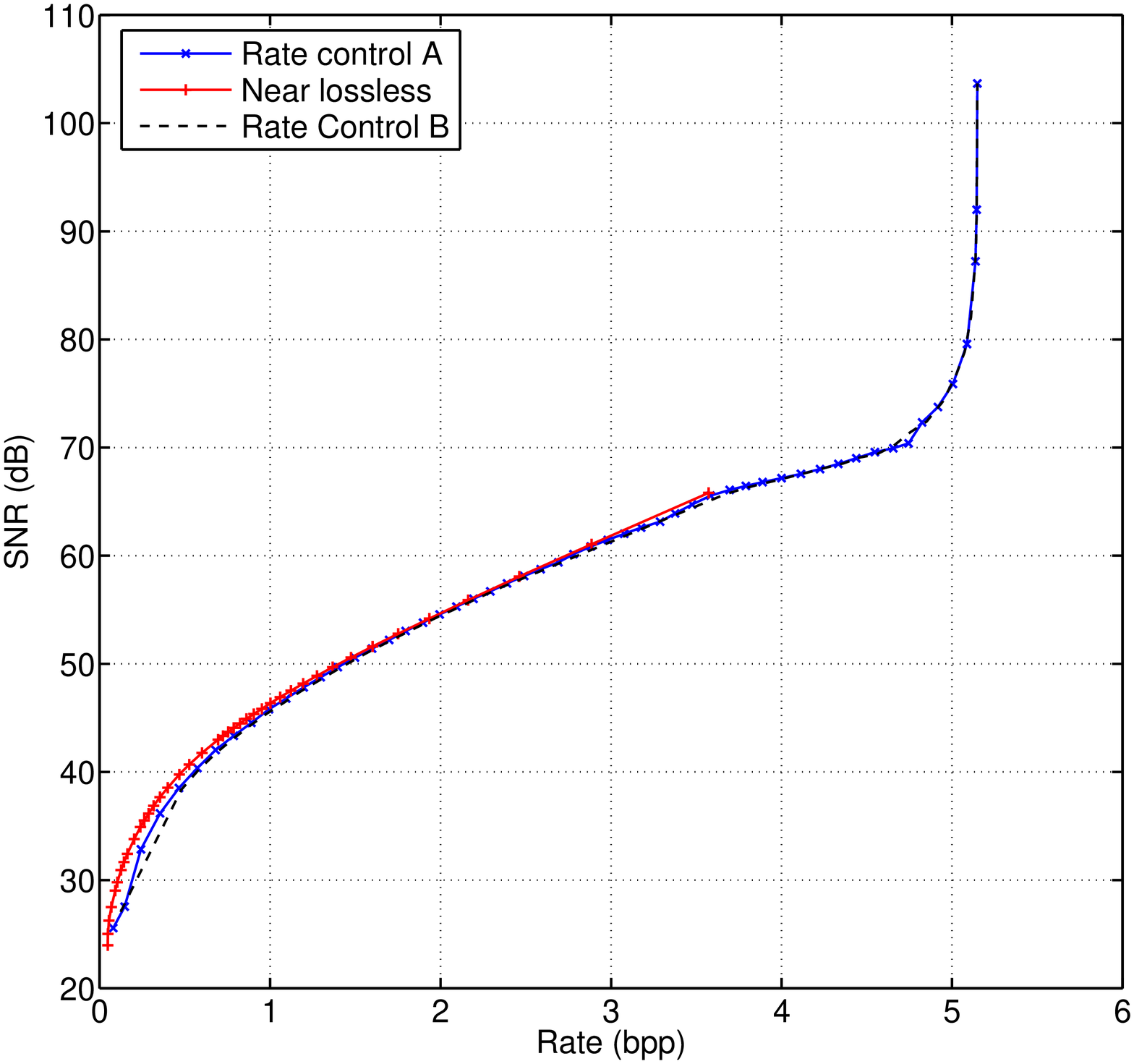}} 
\subfigure[]{
\includegraphics[width=0.32\textwidth]{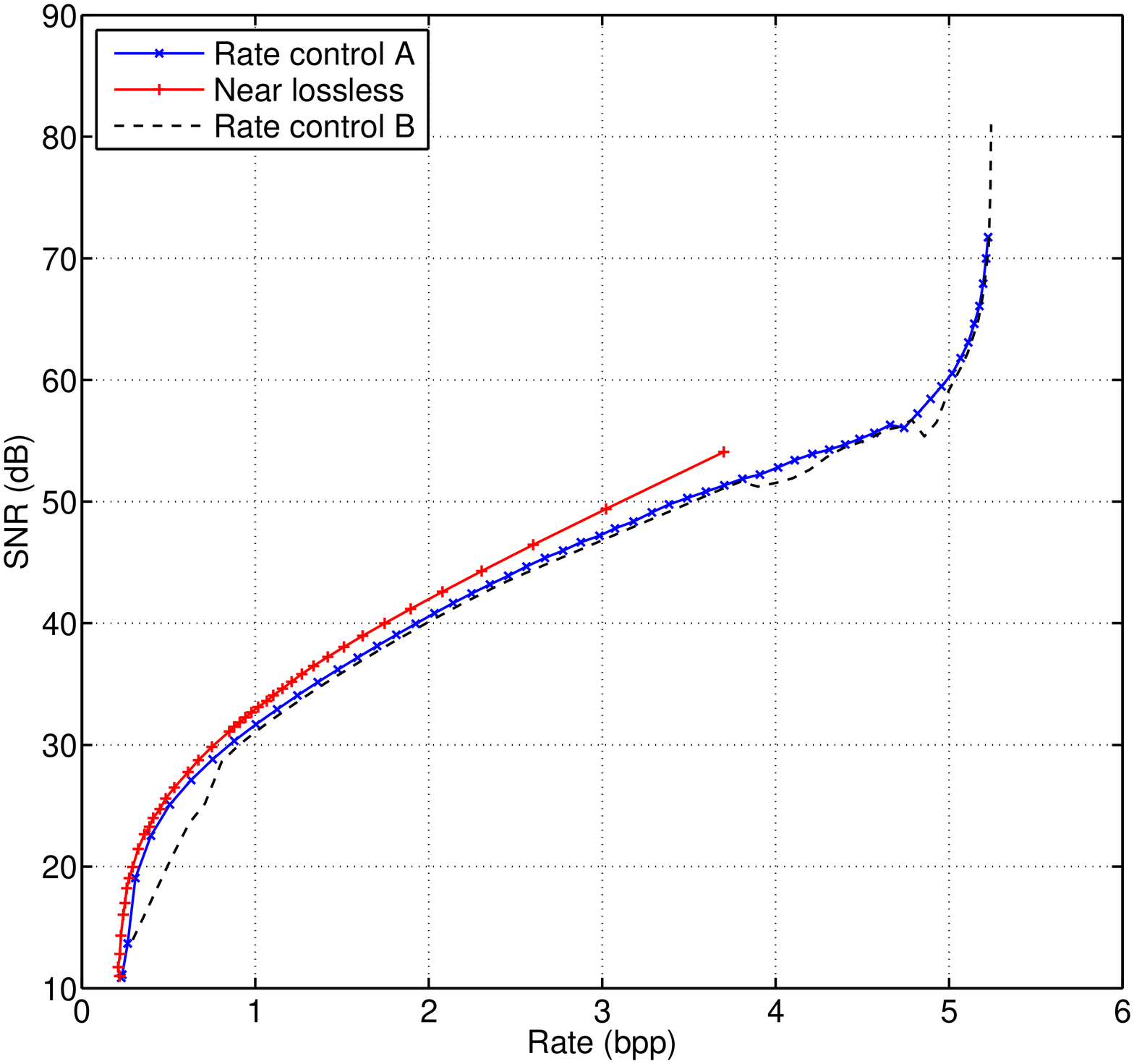}} 
\caption{Rate-SNR curves. (a) \emph{AIRS gran9} , (b) \emph{CRISM-sc182-nuc}, (c) \emph{vgt1\_1b}}
\label{rateditortion}
\end{figure*}
In this section we study the rate-distortion performance of the encoder, and, in particular, we focus on the suboptimality of the rate controller with respect to a near-lossless encoding of the images. The problem with near-lossless compression is that, apart from the lack of rate control, only certain rates can be achieved due to choice of a single quantization step for the whole image. At high rates, this causes rate-distortion points to be quite far apart from each other (\emph{e.g}, as much as $0.5$ bpp), hence not allowing very flexible choices for the rate-distortion operating point. On the other hand, rate control allows to achieve very fine granularity and any rate-distortion point, from low rates up to lossless compression, can be used.
Figure \ref{rateditortion} shows the rate-SNR curves obtained for near-lossless compression, rate control with mode A and rate control with mode B for some test images. The following definition of SNR is used throughout the paper: 
\begin{align*}
\textsc{SNR} = 10\log_{10} \frac{\sum_{i=1}^{N_{pixels}}x_i^2}{\sum_{i=1}^{N_{pixels}}(x_i-\hat{x}_i)^2}
\end{align*}
being $x_i$ and $\hat{x}_i$ the $i$-th pixel in the original image and in the decoded image, respectively. As already explained in section \ref{sec:modeB}, the great accuracy in the rate achieved by mode B is paid in terms of slightly lower rate-distortion performance. However, it is remarked that when the encoder is run relying on the rate control only, the greater accuracy of mode B often results in better quality than that provided by mode A, which often yields a rate lower than the target. Nevertheless, it can be noticed that the rate-distortion curves for both mode A and mode B are quite close to the near-lossless performance. As an example, for \emph{AIRS gran9} the gap is only about $0.2$ dB at 2 bpp. For \emph{frt00009326\_07\_vnir} the gap at 2 bpp is 0.2 dB for mode A and 0.4 bpp for mode B. We report image \emph{vgt1\_1b} as one of the worst cases of rate-SNR performance, where mode A loses about 1.5 dB with respect to near-lossless encoding and mode B about 1.8 dB, always at 2 bpp.   
We also remark that the curves were obtained without constraining maximum distortion, which can significantly improve performance, as shown in Sec. \ref{hybrid}.

\subsection{Comparison with transform coding}

The CCSDS-122 standard \cite{ccsds122} defines a transform coder employing the Discrete Wavelet Transform and a low-complexity Bit Plane Encoder, for the compression of 2D imagery. An extension of such standard to multiband imagery by including a spectral transform has been implemented and is publicly available online \cite{deltasoftware}. The implementation combines the CCSDS-122 encoder with the POT spectral transform \cite{pot}. The proposed system is run using the memory-1 mode B of rate control (slice-by-slice feedback) with $\tau=5$, with full prediction mode and neighbor-oriented local sums, while the transform system performs the rate allocation by means of the reverse waterfill algorithm \cite{taubmanjpeg2000}. We remark that the availability of the rate controller for the predictive system allows to perform a direct comparison, in which both systems work in a pure rate-controlled fashion by specifying a target rate and letting the encoder perform all the coding decisions automatically. The proposed rate controller is operated using $16\times 16$ blocks and $\mathrm{ESTLINES}=2$, meaning that only two lines out of 16 are used for estimation of the variance of unquantized prediction residuals. On the other hand, the transform coding system buffers 8 lines, thus requiring more memory. 
Table \ref{hydra_vs_delta} reports a comparison between the two systems, highlighting in bold the best results. 
The proposed predictive system is competitive against transform coding by typically providing superior quality, both in terms of SNR and in terms of maximum absolute distortion (MAD), for the same rate. Other quality metrics such as the maximum spectral angle (MSA) and average spectral angle (ASA) have been studied in the literature \cite{quality_metrics}, but we omit them for reasons of brevity. However, such metrics follow the same trends observed for SNR and MAD, respectively. We observe that, at lower rates, the proposed algorithm achieves significant gains in terms of MAD even when the SNR gain is small or for the few cases when the transform coder is more effective.
We also report (Table \ref{hydra_gains}) the mean and median gains in terms of SNR and MAD obtained by the proposed algorithm on the whole corpus of images. We choose to report the median gain, as well as the mean, due to some outliers in the results that bias the mean gain statistics due to the large gain obtained by the proposed system. It is sometimes the case that the proposed system reaches lossless quality for the desired rate, while the transform coder does not. Such cases are excluded from the computation of the SNR gain as it would be infinite. We can notice that the higher gains are achieved for higher rates, confirming the typical behaviour of predictive encoders with respect to transform encoders.
Finally, we report a visual comparison (Fig. \ref{visual_comp}) on a cropped portion of the first band of the \emph{vgt1\_1b} test image. The two algorithms are compared at the same rate of 2 bpp. Although it is difficult to see the differences with the naked eye on paper, the figures reporting the magnitude of the error clearly show that the proposed predictive approach consistently achieves smaller deviations from the original image. Also, notice that despite the block-based approach of the proposed algorithm, scalar quantization of the prediction residuals does not produce blocketization artifacts.

\begin{figure*}
\centerline{
\includegraphics[width=0.185\textwidth]{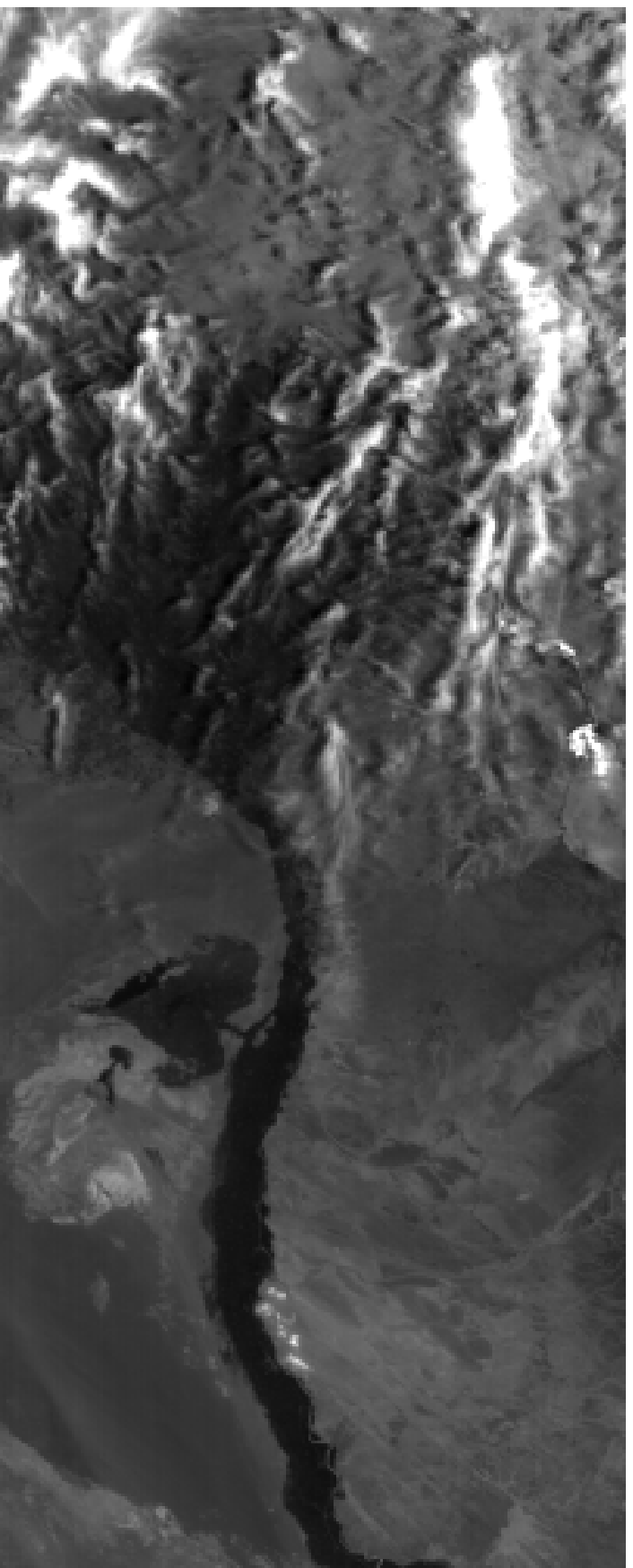}
\includegraphics[width=0.185\textwidth]{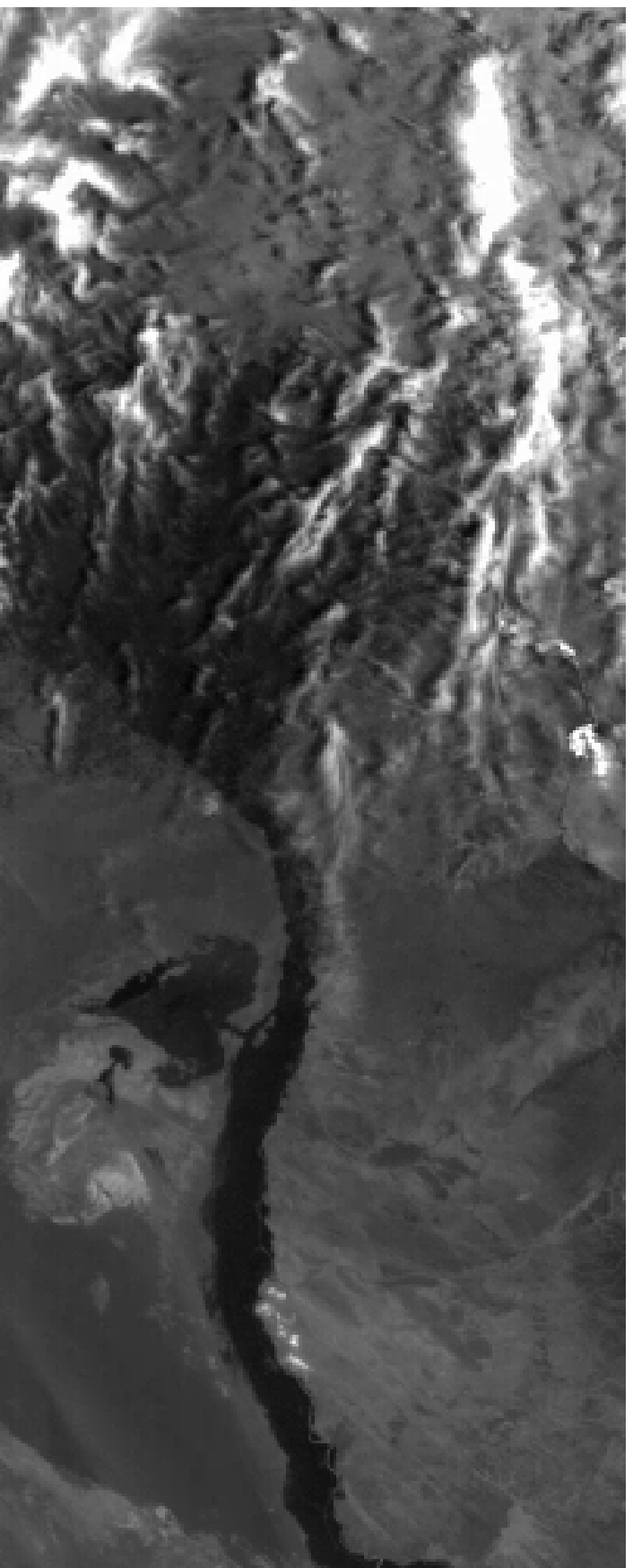}
\includegraphics[width=0.185\textwidth]{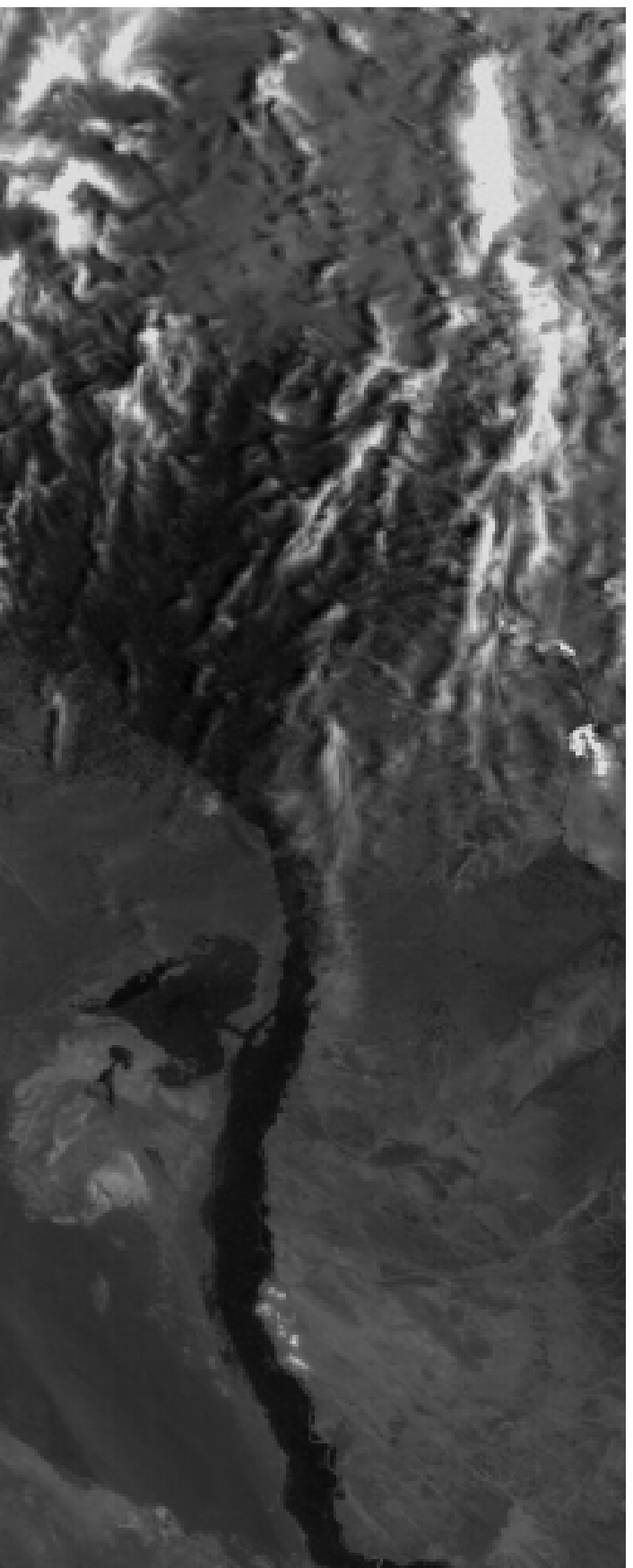}
\includegraphics[width=0.185\textwidth]{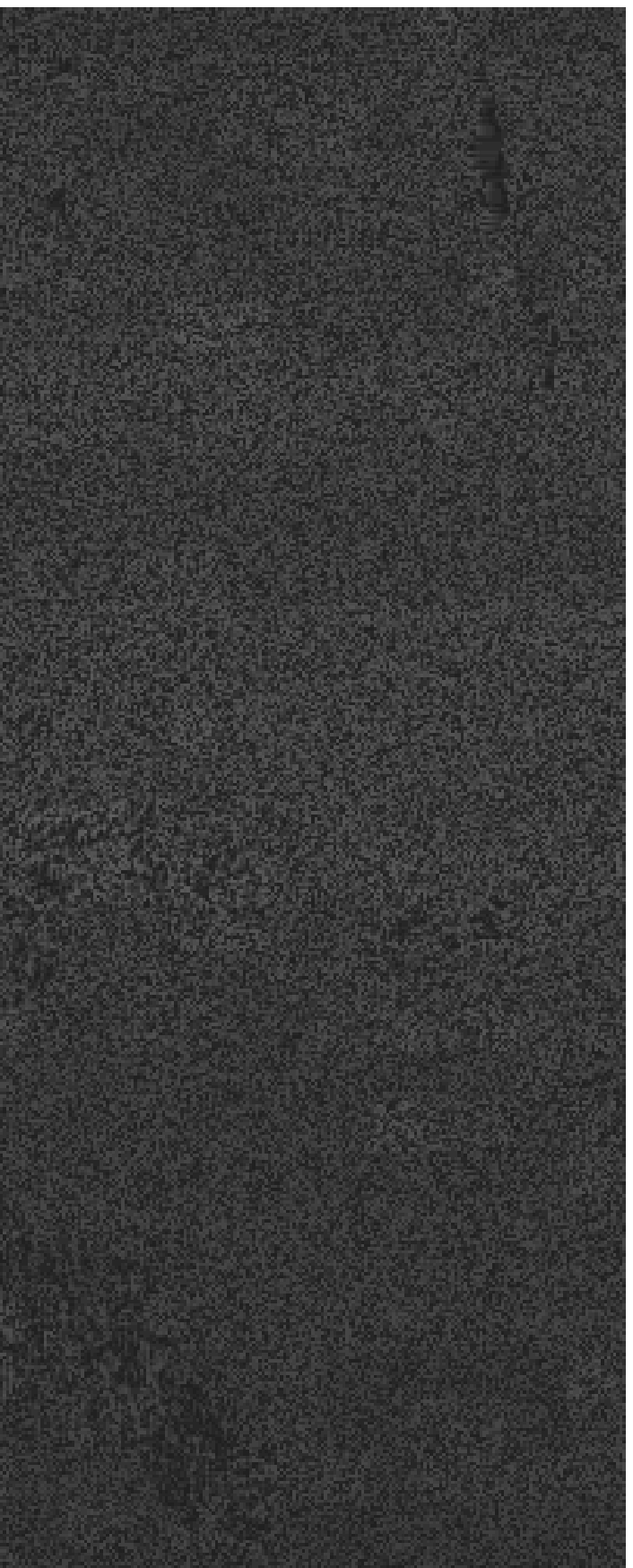}
\includegraphics[width=0.185\textwidth]{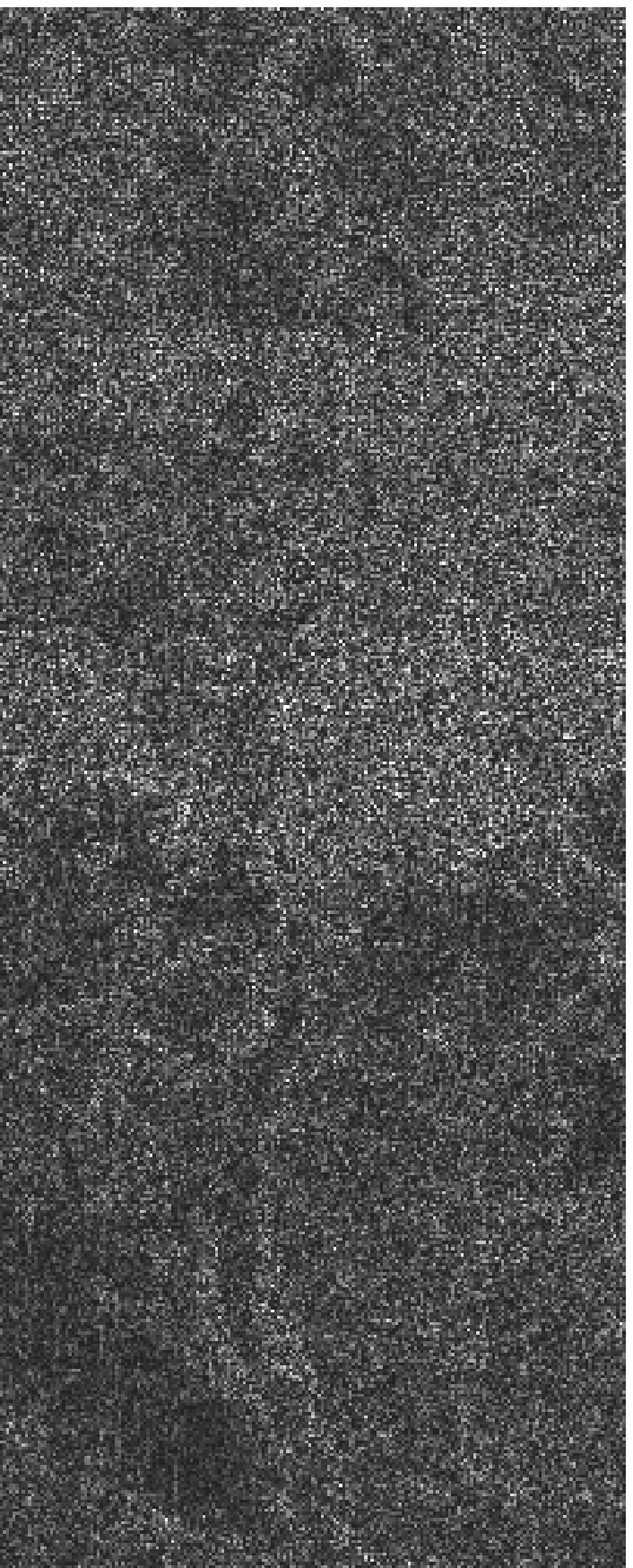}}
\caption{Visual comparison of a crop of \emph{vgt1\_1b}, band 1. From left to right: original image, predictive approach, transform approach, absolute error for predictive, absolute error for transform}
\label{visual_comp}
\end{figure*}

\vspace*{-0.2cm}
\section{Conclusions}
In this paper we have presented a rate control algorithm for onboard compression of hyperspectral and multispectral images designed to work with predictive encoders and suitable for implementation on spacecrafts. While rate control is easy to perform in the case of transform coding, the predictive coding paradigm poses significant challenges. We have proposed a scheme based on modelling the predicted rate and distortion for non-overlapping blocks of the image and optimizing the assignment of quantization step sizes over slices of the image. Extensive tests have shown that the algorithm can effectively control the output rate with excellent accuracy. Moreover, rate control solves one of the issues of near-lossless compression, \emph{i.e.}, the scarce number of operating points at high rates. In fact, the availability of a rate controller allows the user to choose any rate, depending on their specific needs. We have also proposed an extension of the CCSDS-123 standard to deal with lossy, near-lossless and hybrid near-lossless rate-controlled compression in a single package. The resulting architecture is competitive with the transform coding approach, significantly outperforming it at all rates from 1 bpp up to lossless compression.

\begin{table*}[htbp]
\caption{PREDICTIVE (CCSDS-123 + Rate Control B) vs. TRANSFORM (CCSDS-122 + POT + Reverse Waterfill)}
\centerline{
\begin{tabular}{c|c|ccc|cc}
 \multicolumn{ 2}{c}{} & \multicolumn{ 3}{c}{PREDICTIVE} & \multicolumn{ 2}{c}{TRANSFORM} \\ 
IMAGE & TARGET (bpp) & RATE (bpp) & SNR (dB) & MAD & SNR (dB) & MAD \\ 
\hline
 & 1.00 & 1.004 & 46.15 & \bf{247} & \bf{48.87} & 433 \\ 
\textsc{aviris\_sc0} & 2.00 & 1.995 & \bf{55.93} & \bf{35} & 55.02 & 107 \\ 
$512\times680\times224$ & 3.00 & 3.006 & \bf{62.28} & \bf{30} & 59.69 & 41 \\ 
 & 4.00 & 3.994 & \bf{69.49} & \bf{3} & 65.03 & 21 \\ 
 \hline
 & 1.00 & 0.951 & \bf{47.79} & \bf{10} & 44.24 & 167 \\ 
\textsc{CRISM-sc214-nuc} & 2.00 & 1.938 & \bf{56.26} & \bf{4} & 52.72 & 45 \\ 
$510\times640\times545$ & 3.00 & 2.920 & \bf{62.09} & \bf{2} & 60.33 & 7 \\ 
 & 4.00 & 3.818 & \bf{88.15} & \bf{1} & 65.32 & 3 \\ 
 \hline
 & 1.00 & 1.001 & \bf{44.00} & \bf{63} & 37.16 & 1278 \\ 
\textsc{M3targetB} & 2.00 & 1.998 & \bf{54.25} & \bf{10} & 46.38 & 243 \\ 
$512\times640\times260$ & 3.00 & 2.997 & \bf{60.08} & \bf{8} & 58.52 & 32 \\ 
 & 4.00 & 3.929 & \bf{69.55} & \bf{2} & 64.88 & 7 \\ 
 \hline
 & 1.00 & 1.016 & \bf{29.36} & \bf{255} & 25.87 & 752 \\ 
\textsc{MODIS-MOD01\_500m} & 2.00 & 2.009 & \bf{39.32} & \bf{100} & 36.54 & 244 \\ 
$4060\times2708\times5$ & 3.00 & 3.005 & \bf{47.25} & \bf{37} & 42.99 & 168 \\ 
 & 4.00 & 4.002 & \bf{54.08} & \bf{16} & 49.77 & 53 \\ 
 \hline
 & 1.00 & 1.004 & 42.19 & \bf{88} & \bf{43.60} & 618 \\ 
\textsc{MODIS-MOD01night} & 2.00 & 2.002 & \bf{52.49} & \bf{32} & 51.10 & 276 \\ 
$2030\times1354\times17$ & 3.00 & 3.001 & \bf{59.84} & \bf{12} & 56.51 & 277 \\ 
 & 4.00 & 4.000 & \bf{65.78} & \bf{5} & 61.48 & 47 \\ 
 \hline
 & 1.00 & 1.024 & \bf{28.67} & \bf{137} & 27.16 & 670 \\ 
\textsc{montpellier} & 2.00 & 2.035 & \bf{37.23} & \bf{42} & 33.46 & 635 \\ 
$224\times2456\times4$ & 3.00 & 3.030 & \bf{44.60} & \bf{18} & 39.88 & 92 \\ 
 & 4.00 & 4.032 & \bf{51.22} & \bf{7} & 45.44 & 47 \\ 
 \hline
 & 1.00 & 1.010 & 23.80 & \bf{25} & \bf{24.26} & 74 \\ 
\textsc{toulouse\_spot5\_xs\_extract1} & 2.00 & 2.002 & \bf{31.88} & \bf{7} & 30.37 & 36 \\ 
$1024\times1024\times3$ & 3.00 & 2.999 & \bf{38.53} & \bf{4} & 35.98 & 14 \\ 
 & 4.00 & 3.997 & \bf{44.30} & \bf{2} & 41.29 & 7 \\ 
 \hline
 & 1.00 & 1.009 & \bf{31.07} & \bf{83} & 27.93 & 372 \\ 
\textsc{vgt1\_1b} & 2.00 & 2.004 & \bf{40.18} & \bf{27} & 37.05 & 231 \\ 
$10080\times1728\times4$ & 3.00 & 3.002 & \bf{47.31} & \bf{11} & 44.02 & 64 \\ 
 & 4.00 & 4.001 & \bf{53.52} & \bf{5} & 49.76 & 15 \\ 
 \hline
\end{tabular}}
\vspace*{-0.0cm}
\label{hydra_vs_delta}
\end{table*}

\begin{table*}[htbp]
\caption{MEAN AND MEDIAN GAIN}
\centerline{
\begin{tabular}{c| c c| c c }
RATE (bpp) & MEAN SNR GAIN (dB) & MEDIAN SNR GAIN (dB) & MEAN MAD GAIN & MEDIAN MAD GAIN \\
\hline
1.00 & 1.55 & 1.51 & 727 & 404 \\
\hline
2.00 & 2.82 & 2.17 & 323 & 117 \\
\hline
3.00 & 3.46 & 2.93 & 123 & 29 \\
\hline
4.00 & 6.6 & 4.31 & 54 & 7 \\
\hline
\end{tabular}}
\label{hydra_gains}
\end{table*}

\section{Acknowledgements}
We would like to thank Ian Blanes from the Universitat Aut\`{o}noma de Barcelona for precious support on using the Delta software developed by the Group on Interactive Coding of Images. 


\begin{IEEEbiography}[{\includegraphics[width=1in,height=1.25in,clip,keepaspectratio]{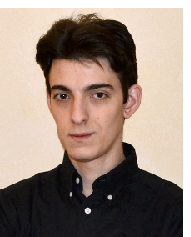}}]{Diego Valsesia}
Diego Valsesia received his MSc in Telecommunications Engineering from Politecnico di Torino and the MSc in Electrical and Computer Engineering from the University of Illinois at Chicago, both in 2012. He is currently a Ph.D. student at the Department of Electronics and Telecommunications of Politecnico di Torino. His main research interests include compression of remote sensing images, compressed sensing and sparse representations.
\end{IEEEbiography}
\vspace*{-10pt}
\begin{IEEEbiography}[{\includegraphics[width=1in,height=1.25in,clip,keepaspectratio]{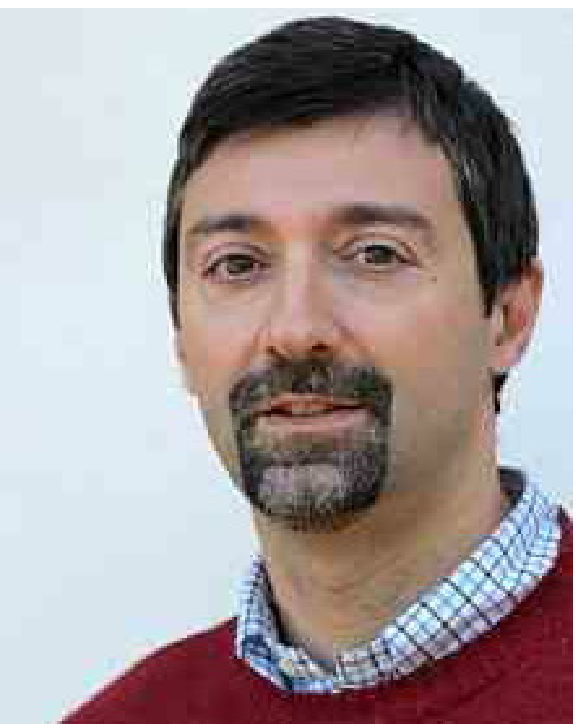}}]{Enrico Magli}
Enrico Magli received the Ph.D. degree in Electrical Engineering in 2001, from Politecnico di Torino, Italy. He is currently an Associate Professor at the same university, where he leads the Image Processing Lab. His research interests are in the field of compression of satellite images, multimedia signal processing and networking, compressive sensing, distributed source coding, image and video security. He is an associate editor of the IEEE Transactions on Circuits and Systems for Video Technology, and of the EURASIP Journal on Information Security. He has been co-editor of JPEG 2000 Part 11 - Wireless JPEG 2000. He is general co-chair of IEEE MMSP 2013, and has been TPC co-chair of ICME 2012, VCIP 2012, MMSP 2011 and IMAP 2007. He has published about 40 papers in refereed international journals, 3 book chapters, and over 90 conference papers. He is a co-recipient of the IEEE Geoscience and Remote Sensing Society 2011 Transactions Prize Paper Award, and has received the 2010 Best Reviewer Award of IEEE Journal of Selected Topics in Applied Earth Observation and Remote Sensing.  He has received a 2010 Best Associate Editor Award of IEEE Transactions on Circuits and Systems for Video Technology. 
\end{IEEEbiography}

\end{document}